\newtheorem{lemma}{Lemma}
\newcommand{\C}{\mathbf{C}}
\newcommand{\CAB}{\mathbf{C}_{A\rightarrow B}}
\newcommand{\CBA}{\mathbf{C}_{B\rightarrow A}}
\newcommand{\HAB}{\mathbf{H}_{A\rightarrow B}}
\newcommand{\HBA}{\mathbf{H}_{B\rightarrow A}}
\newcommand{\h}{\mathbf{h}}
\newcommand{\T}{\mathbf{T}}
\newcommand{\R}{\mathbf{R}}
\newcommand{\TA}{\mathbf{T}_A}
\newcommand{\TB}{\mathbf{T}_B}
\newcommand{\RA}{\mathbf{R}_A}
\newcommand{\RB}{\mathbf{R}_B}
\newcommand{\Y}{\mathbf{Y}}
\newcommand{\y}{\mathbf{y}}
\newcommand{\f}{\mathbf{f}}
\newcommand{\g}{\mathbf{g}}
\newcommand{\F}{\mathbf{F}}
\newcommand{\FA}{\mathbf{F}_A}
\newcommand{\FB}{\mathbf{F}_B}
\newcommand{\n}{\mathbf{n}}
\newcommand{\N}{\mathbf{N}}
\newcommand{\Lbold}{\mathbf{L}}
\newcommand{\M}{\mathbf{M}}
\newcommand{\I}{\mathbf{I}}
\newcommand{\0}{\mathbf{0}}
\newcommand{\A}{\mathbf{A}}
\newcommand{\rw}{\rightarrow}
\newcommand{\beq}{\begin{equation}}
\newcommand{\eeq}{\end{equation}}
\newcommand{\bal}{\begin{aligned}}
\newcommand{\eal}{\end{aligned}}
\newcommand{\bbm}{\begin{bmatrix}}
\newcommand{\ebm}{\end{bmatrix}}
\newcommand{\diag}{\mbox{diag}}
\newcommand{\B}{\mathbf{B}}
\newcommand{\D}{\mathbf{D}}
\newcommand{\X}{\mathbf{X}}
\newcommand{\x}{\mathbf{x}}
\newcommand{\bmcY}{\bm{\mathcal{Y}}}
\newcommand{\bmcN}{\bm{\mathcal{N}}}
\newcommand{\eo}{\mathbf{e}}
\newcommand{\tin}{\widetilde{\n}}
\newcommand{\tiN}{\widetilde{\N}}
\newcommand{\J}{\mathbf{J}}
\newcommand{\bmcP}{\bm{\mathcal{P}}}
\newcommand{\bmcF}{\bm{\mathcal{F}}}
\newcommand{\bmcH}{\bm{\mathcal{H}}}
\newcommand{\bP}{\mathbf{P}}
\newcommand{\fh}{\hat{\f}}
\newcommand{\hh}{\hat{\h}}
\newcommand{\cC}{{\cal C}}
\newcommand{\cV}{{\mathcal V}}
\title{A Framework for Over-the-air Reciprocity Calibration for TDD Massive MIMO Systems}
\author{Xiwen ˜JIANG, ˜\IEEEmembership{} 
Alexis ˜Decurninge, ˜\IEEEmembership{}
Kalyana ˜Gopala, ˜\IEEEmembership{} 
Florian ˜Kaltenberger, ˜\IEEEmembership{Member, ˜IEEE,} \\
Maxime ˜Guillaud, ˜\IEEEmembership{Senior Member, ˜IEEE,}
Dirk ˜Slock, ˜\IEEEmembership{Fellow, ˜IEEE,} 
and ˜Luc ˜Deneire, ˜\IEEEmembership{Member, ˜IEEE}%
       \thanks{This work is supported in part by Huawei Mathematical and Algorithmic Sciences Lab in Paris through the project of ``Modeling, Calibrating and Exploiting Channel Reciprocity for Massive MIMO". It is partly supported by the French Government (National Research Agency, ANR) through the ``Investments for the Future" Program \#ANR-11-LABX-0031-01). }
       \thanks{X. Jiang, K. Gopala, F. Kaltenberger and D. Slock are with Communication Systems Department, EURECOM. (e-mail: \{xiwen.jiang, kalyana.gopala, florian.kaltenberger, dirk.slock\}@eurecom.fr)}
       \thanks{A. Decurninge and M. Guillaud are with the Mathematical and Algorithmic Sciences Lab, Paris Research Center, Huawei Technologies France. (e-mail: \{alexis.decurninge, maxime.guillaud\}@huawei.com)}
       \thanks{L. Deneire is with Laboratoire I3S, Universit\'e de Nice Sophia Antipolis. (e-mail: luc.deneire@unice.fr)}
}
\begin{document}

\onecolumn
\vspace*{6cm}
{\Large This work has been submitted to the IEEE for possible publication.  Copyright may be transferred without notice, after which this version may no longer be accessible.}
\newpage
\twocolumn
\maketitle

\begin{abstract}
One of the biggest challenges in operating massive multiple-input multiple-output systems is the acquisition of accurate channel state information at the transmitter. To take up this challenge, time division duplex is more favorable thanks to its channel reciprocity between downlink and uplink. However, while the propagation channel over the air is reciprocal, the radio-frequency front-ends in the transceivers are not. Therefore, calibration is required to compensate the RF hardware asymmetry.

Although various over-the-air calibration methods exist to address the above problem, this paper offers a unified representation of these algorithms, providing a higher level view on the calibration problem, and introduces innovations on calibration methods. We present a novel family of calibration methods, based on antenna grouping, which improve accuracy and speed up the calibration process compared to existing methods. We then provide the Cram\'er-Rao bound as the performance evaluation benchmark and compare maximum likelihood and least squares estimators. We also differentiate between coherent and non-coherent accumulation of calibration measurements, and point out that enabling non-coherent accumulation allows the training to be spread in time, minimizing impact to the data service. Overall, these results have special value in allowing to design reciprocity calibration techniques that are both accurate and resource-effective.

\end{abstract}

\begin{IEEEkeywords}
Massive MIMO, TDD, channel reciprocity calibration.
\end{IEEEkeywords}

\section{Introduction}
\label{sec:intro}

Massive multiple-input multiple-output (MIMO) is a promising air interface technology for the next generation of wireless communications. With large number of antennas installed at the base station (BS) simultaneously serving multiple user equipments (UEs), massive MIMO can dramatically improve the spectral efficiency of cellular networks \cite{marzetta2010noncooperative}, \cite{larsson2013massive}.

For downlink (DL), one of the fundamental challenges to fully realize the potential of massive MIMO is the acquisition of accurate channel state information at the transmitter (CSIT). Time division duplex (TDD) thus attracts great attention from the research community as it enjoys channel reciprocity between DL and uplink (UL), thanks to which the BS can obtain the CSIT from the channel estimation in the UL. In fact, traditional ways to get CSIT from UE feedback becomes infeasible when the antenna array size at the BS scales up, because of the heavy signaling overhead it incurs in the UL.

Channel reciprocity in TDD systems refers to the fact that the physical over-the-air (OTA) channels are the same for UL and DL \cite{Lorentz1896reciprocity,smith2012electromagnetic} within channel coherence time. However, the channel as seen by the digital baseband processor contains not only the physical OTA channel but also radio frequency (RF) front-ends, including the hardware from digital-to-analog converter (DAC) to transmit antennas at the transmitter (Tx) and the corresponding part, from receiving antennas to analog-to-digital converter (ADC), at the receiver (Rx).  The various impairments to reciprocity can be due to manufacturing variability in the power amplifiers and low-noise amplifiers, different cable lengths across the antennas, imperfect clock synchronization, duplexer response, etc. Due to these, the hardware in the Tx and Rx RF chains are, in general, not identical, and therefore the channel from a digital signal processing point of view is not reciprocal. If not taken into account, these hardware-related asymmetries will cause inaccuracy in the CSIT estimation and, as a consequence, seriously degrade the DL beamforming performance \cite{guey2004modeling, luo2016multi, zhang2015large, jiang2016accurately}.

In order to compensate the hardware asymmetry and restore channel reciprocity, calibration techniques are needed. This topic has been explored long before the advent of massive MIMO. In \cite{bourdoux2003non, nishimori2001automatic, nishimori2014effectiveness, petermann2013multi, benzin2017internal}, it is suggested to add additional hardware components in transceivers which are dedicated to calibration. This method (which we refer to as \emph{absolute} calibration) consists in compensating the Tx and Rx RF asymmetry independently in each transceiver; however this does not appear to be a cost-effective solution. \cite{guillaud2005practical, kaltenberger2010relative, shi2011efficient, kouassi2011performance} thus put forward ``relative'' calibration schemes\footnote{The term \emph{relative} indicates here that the calibration coefficients relate the UL and DL digital channels, as opposed to absolute calibration which relates digital domain and propagation domain versions of a channel.}, where the calibration coefficients are estimated using signal processing methods based on OTA bi-directional channel estimation between BS and UE. Since hardware properties can be expected to evolve slowly, and these coefficients can be obtained in the initialization phase of the system (calibration phase), they can be used later together with the instantaneous UL channel estimate to obtain downlink CSIT.

With the advent of massive MIMO, traditional relative calibration methods are challenged, because they require the UE to feed back a large amount of DL CSI for all BS antennas. It was observed in \cite{Kouassi:13} that the calibration factor at the BS side is the same for all channels from the BS to any UE. This was exploited in \cite{Kouassi:13} to determine the BS side calibration factor of a secondary BS with the cooperation of a secondary UE, allowing beamforming with
 zero-forcing to a primary UE without its collaboration. This idea was then pushed further in a number of OTA self-calibration approaches which only require the exchange of OTA training signals between elements of the BS array.
Indeed, for optimizing multi-user massive MIMO systems, the asymmetry in the number of antennas between the BS and the UEs means that most of the massive MIMO multi-user multiplexing gain can be achieved through BS-side only calibration \cite{R1-091794,R1-091752}. These OTA self-calibration approaches have the advantage that, unlike classical single-link relative calibration, no CSI feedback is involved, since all the elements of the BS array are already connected to the same baseband signal processor. Such ``single-side" or ``internal" calibration methods were proposed in \cite{shepard2012argos, rogalin2014scalable, vieira2014reciprocity, vieira2017reciprocity, papadopoulos2014avalanche}. In \cite{shepard2012argos}, the authors reported on the massive MIMO Argos prototype, where calibration is performed OTA with the help of a reference antenna. By performing bi-directional transmission between the reference antenna and the rest of the antenna array, it is possible to estimate the calibration coefficients up to a common scalar ambiguity which will not influence the final DL beamforming capability. The Argos calibration approach however is sensitive to the location of the reference antenna, and as one of the consequences, is not suitable for distributed massive MIMO. This concern motivated the introduction of a method (Rogalin \emph{et al.} in \cite{rogalin2014scalable}) whereby calibration is not performed w.r.t. a reference antenna\footnote{The method in \cite{rogalin2014scalable} is denoted as ``least-squares (LS) calibration", however we will not use this terminology since most calibration techniques proposed in the literature ultimately rely on LS estimation}. It has the spirit of distributed algorithms, making it a good calibration method for antenna arrays having a distributed topology. Note that it can also be applied to colocated massive MIMO, as  in the LuMaMi massive MIMO prototype \cite{vieira2014flexible} where a weighted version of the estimator presented in \cite{vieira2014reciprocity} is used, whereas a Maximum Likelihood (ML) estimator is presented in \cite{vieira2017reciprocity}.  Moreover, a fast calibration method named Avalanche was proposed in \cite{papadopoulos2014avalanche}; its principle is to use a calibrated sub-array to calibrate uncalibrated elements. The calibrated array thus grows during the calibration process in a way similar to the avalanche phenomenon.

Among other relevant works, we refer to \cite{luo2015robust, wei2016mutual, tsoulos1997calibration, tsoulos1998space, Jiang2015}. In \cite{luo2015robust}, the author provides an idea to perform system health monitoring on the calibrated reciprocity. Under the assumption that the majority of calibration coefficients stay calibrated and only a minority of them change, the authors propose a compressed sensing enabled detection algorithm to find out which calibration coefficient has changed based on the sparsity in the vector representing the coefficient change. In \cite{wei2016mutual}, a calibration method dedicated to maximum ratio transmission (MRT) is proposed. Experimental data about the calibration coefficients are reported in \cite{tsoulos1997calibration, tsoulos1998space, shepard2012argos, Jiang2015}, giving an insight on how the impairments evolve in the time and frequency domains as well as with the temperature, and about the hardware properties behind this effect.

In the present article, we introduce a unified framework to represent different existing calibration methods. Although they appear at first sight to be different, we reveal that all existing calibration methods can be modeled under a general pilot based calibration framework; different ways to partition the array into transmit and receive elements during successive training phases yield different schemes. The unified representation shows the relationship between these methods and provides alternative ways to obtain corresponding estimators.
As this framework gives a general and high level understanding of the TDD calibration problem in massive MIMO systems, it opens up possibilities for new calibration methods. As an example, we present a novel family of calibration schemes based on antenna grouping, which can greatly speed up the calibration process with respect to the classical approaches. We will show that our proposed method greatly outperforms the Avalanche method \cite{papadopoulos2014avalanche} in terms of calibration accuracy, yet it is equally fast. In order to evaluate the performance of calibration schemes, we derive the Cram\'er-Rao bounds (CRB) of the accuracy of calibration coefficients estimation. Another important contribution of this work is the introduction of non-coherent accumulation of the measurements used for calibration. We will see that calibration does not necessarily have to be performed in an intensive manner during a single channel coherence interval, but can rather be executed using time resources distributed over a relatively long period. This enables TDD reciprocity calibration to be interleaved with the normal data transmission or reception, leaving it almost invisible for the whole system. 


The rest of this paper is organized as follows. Section~\ref{sec:tdd_rec_calib} describes the basic principles of reciprocity calibration in a TDD based MIMO system. Section~\ref{sec:uni_rep} presents the TDD reciprocity system model and introduces our unified framework. Section~\ref{sec:prior} presents how Argos, Rogalin and Avalanche calibration algorithms fit into this model as well as how we can obtain the corresponding estimators. In Section~\ref{sec:fast_calib}, we present the fast calibration scheme based on antenna grouping and discuss the minimum number of channel uses it requires to estimate all calibration coefficients. In Section~\ref{sec:crb}, we address the optimal estimation problem of reciprocity calibration parameters, we derive the CRB, propose a maximum likelihood (ML) estimator and compare it with the LS estimator. Section~\ref{sec:non_coh_accum} is dedicated to non-coherent accumulation of measurements. In Section~\ref{sec:simu_result}, we illustrate the performance of the group-based fast calibration method and compare its performance with other calibration algorithms using CRB as the benchmark. Conclusions are drawn in Section~\ref{sec:conclusions}.

The notation adopted in this paper conforms to the following conventions. Vectors and matrices are denoted in lowercase bold and uppercase bold respectively: $\mathbf{a}$ and $\mathbf{A}$. $(\cdot)^*$, $(\cdot)^T$, $(\cdot)^H$, $(\cdot)^{\dagger}$ denote element-wise complex conjugate, transpose, Hermitian transpose and Moore-Penrose pseudo inverse, respectively. $\otimes$ and $*$ denotes the Kronecker product operator and the Khatri--Rao product \cite{khatri1968solutions}, respectively. $\lceil \cdot \rceil$ is the ceiling operator, which rounds a number to the next integer. $\mbox{diag}\{a_1, a_2, \dots, a_M\}$ denotes a diagonal matrix with its diagonal composed of $a_1, a_2, \dots, a_M$, whereas $\mbox{vec}(\mathbf{A})$ denotes the vectorization of the matrix $\mathbf{A}$. $\mathbb{C}$ denotes the set of complex numbers. 

\section{OTA Reciprocity Calibration}
\label{sec:tdd_rec_calib}

In this section, we describe the basic idea of reciprocity calibration in a practical TDD system. Let us consider a system as in Fig.~\ref{fig:rec_mod}, where A represents a BS and B represents a UE, each containing $M_A$ and $M_B$ antennas, respectively. The DL and UL channels flat-fading model (as typically obtained by considering a single subcarrier of a multicarrier system) seen in the digital domain are noted by $\HAB$ and $\HBA$. Since they are formed by the cascade of the Tx impairments, OTA propagation, and Rx impairments, they can be represented by
\beq
\label{eqn:rec_mod1}
\left\{
\bal
 \HAB & =  \RB\CAB\TA,\\
 \HBA & =  \RA\CBA\TB,
\eal
\right.
\eeq
where matrices $\TA$, $\RA$, $\TB$, $\RB$ model the response of the transmit and receive RF front-ends, while $\CAB$ and $\CBA$ model the OTA propagation channels, respectively from A to B and from B to A. The dimension of $\TA$ and $\RA$ are $M_A \times M_A$, whereas that of $\TB$ and $\RB$ are $M_B\times M_B$. The diagonal elements in these matrices represent the linear effects attributable to the impairments in the transmitter and receiver parts of the RF front-ends respectively, whereas the off-diagonal elements correspond to RF crosstalk and antenna mutual coupling\footnote{Here, ``antenna mutual coupling'' is used to describe parasitic effects that two nearby antennas have on each other, when they are either both transmitting or receiving \cite{balanis2016antenna, petermann2013multi}. However, this is different to the channel between transmitting and receiving elements of the same array, which we call the intra-array channel. Note that this differs from the terminology in \cite{vieira2017reciprocity} and \cite{wei2016mutual} where the term mutual coupling is used to denote the intra-array channel.}. It is worth noting that although transmitting and receiving antenna mutual coupling is not generally reciprocal \cite{wei2015reciprocity}, theoretical modeling \cite{petermann2013multi} and experimental results \cite{Jiang2015, shepard2012argos, vieira2017reciprocity} both show that in practice, RF crosstalk and antenna mutual coupling can be ignored for the purpose of reciprocity calibration, which implies that $\TA$, $\RA$, $\TB$, $\RB$ can safely be assumed to be diagonal.

Assuming the system is operating in TDD mode, the OTA channel responses enjoy reciprocity within the channel coherence time, i.e., $\CAB = \CBA^T$. Therefore, we obtain the following relationship between the channels measured in both directions:
\beq
\label{eqn:rec_mod2}
  \bal
    \HAB & = \RB(\RA^{-1}\HBA\TB^{-1})^T\TA \\
            & = \underbrace{\RB\TB^{-T}}_{\FB^{-T}}\HBA^T \underbrace{\RA^{-T}\TA}_{\FA} \\
            & = \FB^{-T}\HBA^T\FA.
  \eal
\eeq
A system utilizing OTA reciprocity calibration normally has two phases for its function. Firstly, during the initialization of the system, the calibration process is performed, which consists in estimating $\FA$ and $\FB$.
Then during the data transmission phase, they are used together with instantaneous measured UL channel $\hat{\mathbf{H}}_{B\rw A}$ to estimate $\HAB$ according to \eqref{eqn:rec_mod2}, based on which advanced beamforming algorithms can be performed.
Since the calibration coefficients typically remain stable \cite{shepard2012argos}, the calibration process does not have to be performed very frequently.

Note that the studies in \cite{R1-091794,R1-091752} pointed out that in a practical multi-user MIMO system, it is mainly the calibration at the BS side which restores the hardware asymmetry and helps to achieve the multi-user MIMO performance, whereas the benefit brought by the calibration on the UE side is not necessarily justified. We thus, in the sequel, focus on the estimation of $\F_A$, although the framework discussed in the following section is not limited to this case.

\begin{figure}[t]
\centering 
\includegraphics[width=\columnwidth]{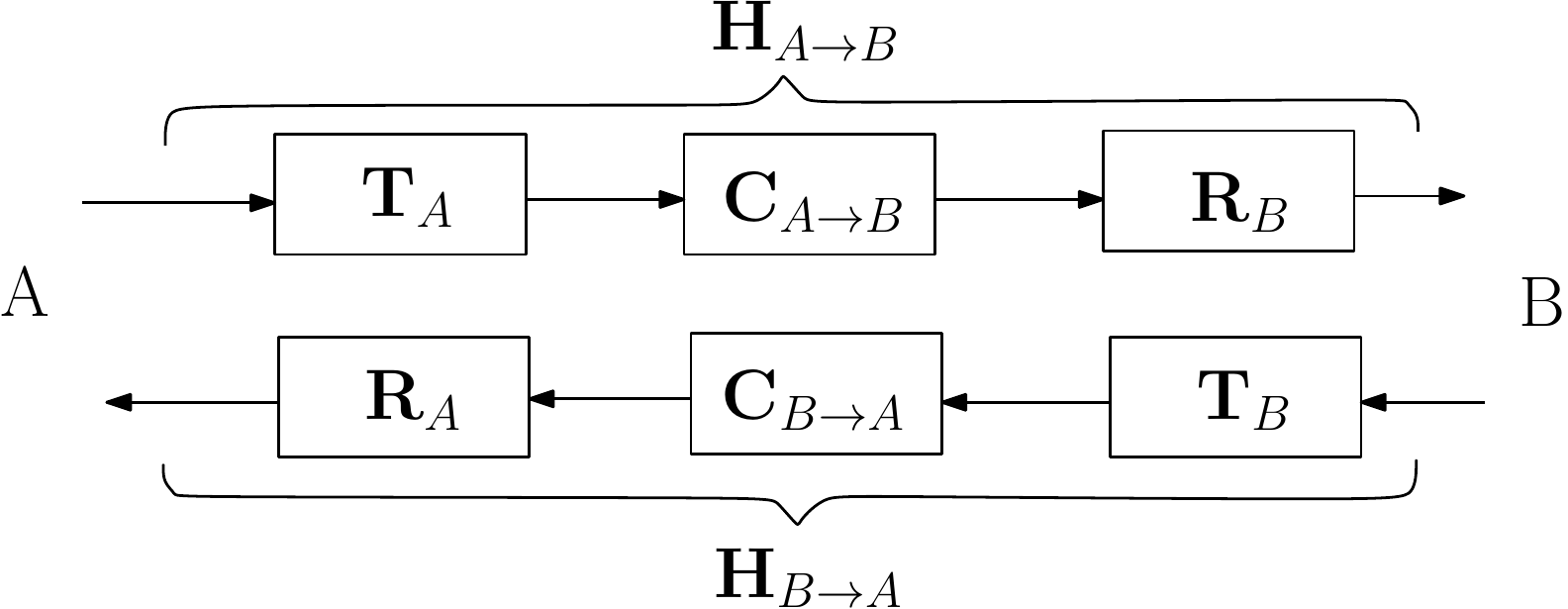}
\caption{Reciprocity Model}
\label{fig:rec_mod}
\end{figure}

\section{General OTA Calibration Framework}
\label{sec:uni_rep}

\subsection{Overview and signalling}
In this section, we present a general framework for OTA pilot-based reciprocity calibration.
Let us consider an antenna array of $M$ elements partitioned into $G$ groups denoted by $A_1, A_2,\dots, A_G$, as in Fig.~\ref{fig:group_calib}. 
Group $A_i$ contains $M_i$ antennas such that $\sum_{i=1}^G M_i = M.$
Each group $A_i$ transmits a sequence of $L_i$ pilot symbols, defined by matrix $\mathbf{P}_i\in\mathbb{C}^{M_i\times L_i}$ where the rows correspond to antennas and the columns to successive channel uses. Note that a channel use can be understood as a time slot or a subcarrier in an OFDM-based system, as long as the calibration parameter can be assumed constant over all channel uses. When an antenna group $i$ transmits, all other groups are considered in receiving mode.  After all $G$ groups have transmitted, the received signal for each resource block of bidirectional transmission between antenna groups $i$ and $j$ is given by
\begin{equation}
\label{eqn:sig_mod3}
\left\{
\begin{array}{l}
\Y_{i\rw j}  = \R_j\C_{i\rw j}\T_i\mathbf{P}_i+ \N_{i\rw j},
\\
\Y_{j\rw i}  = \R_i\C_{j\rw i}\T_j\mathbf{P}_j + \N_{j\rw i},
\end{array}
\right.
\end{equation}
where $\Y_{i\rw j} \in\mathbb{C}^{M_j\times L_i}$ and $\Y_{j\rw i}\in\mathbb{C}^{M_i\times L_j}$ are received signal matrices at antenna groups $j$ and $i$ respectively when the other group is transmitting. $\N_{i\rw j}$ and $\N_{j\rw i}$ represent the corresponding received noise matrix. $\T_i,\;\R_i \in \mathbb{C}^{M_i\times M_i}$ and $\T_j,\;\R_j \in \mathbb{C}^{M_j\times M_j}$ represent the effect of the transmit and receive RF front-ends of antenna elements in groups $i$ and $j$ respectively.

The reciprocity property induces that $\C_{i\rw j}=\C_{j\rw i}^T$,  thus for two different groups $1\leq i\neq j \leq G$, by eliminating $\C_{i\rw j}$ in \eqref{eqn:sig_mod3} we have
\begin{equation}
\label{eqn:obs_general}
\mathbf{P}_i^T\F_i^{T}\Y_{j\rw i} - \Y_{i\rw j}^T\F_j\mathbf{P}_j = \tiN_{ij},
\end{equation}
where the noise component $\tiN_{ij}=\mathbf{P}_i^T\F_i^{T}\N_{j\rw i} - \N_{i\rw j}^T\F_j\mathbf{P}_j$, while $\F_i=\R_i^{-T}\T_i$ and $\F_j = \R_j^{-T}\T_j$ are the calibration matrices for groups $i$ and $j$. The calibration matrix $\F$  is diagonal, and thus takes the form of
$\F = \mbox{diag}\{\F_1, \F_2, \dots, \F_G\}$.
Note that estimating $\F_i$ or $\F_j$ from \eqref{eqn:obs_general} for a given pair $(i,j)$ does not exploit all relevant received data. An optimal estimation jointly considering all received signals for all $(i,j)$ will be proposed in Section~\ref{sec:crb}. Note that the proposed framework also allows to consider using only subsets of the received data which corresponds to some of the methods found in the literature.

\begin{figure}[!t]
\centering
\includegraphics[width=\columnwidth]{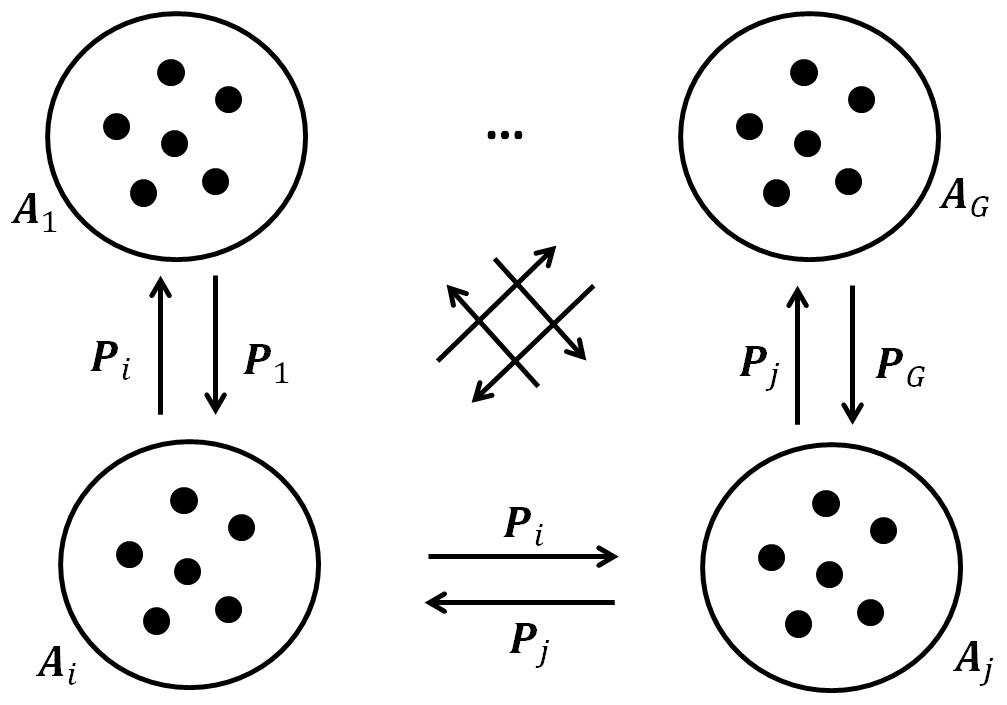}
\caption{Bi-directional transmission between antenna groups.}
\label{fig:group_calib}
\end{figure}

Let us use $\f_i$ and $\f$ to denote the vectors of the diagonal coefficients of $\F_i$ and $\F$ respectively, i.e., $\F_i = \diag\{\f_i\}$ and $\F = \diag\{\f\}$.
This allows us to vectorize (\ref{eqn:obs_general}) into
\begin{equation}
\label{eqn:obs_general2}
(\Y_{j\rw i}^T*\mathbf{P}_i^T)\f_i - (\mathbf{P}_j^T*\Y_{i\rw j}^T)\f_j = \tin_{ij},
\end{equation}
where $*$ denotes the Khatri--Rao product (or column-wise Kronecker product\footnote{With matrices $\A$ and $\B$ partitioned into columns, $\mathbf{A} = \bbm\mathbf{a}_1 & \mathbf{a}_2 & \dots &\mathbf{a}_M\ebm$ and $\mathbf{B} = \bbm \mathbf{b}_1 & \mathbf{b}_2 & \dots &\mathbf{b}_M\ebm$ where $\mathbf{a}_i$ and $\mathbf{b}_i$ are column vectors for $i \in 1 \dots M$, then, $\mathbf{A}*\mathbf{B} = \bbm \mathbf{a}_1\otimes \mathbf{b}_1 & \mathbf{a}_2\otimes \mathbf{b}_2 & \dots & \mathbf{a}_M\otimes \mathbf{b}_M\ebm$ \cite{khatri1968solutions}.}), where we have used the equality $\mbox{vec}(\A\,\diag(\x)\,\B) = (\B^T * \A) \, \x$. Note that, if we do not suppose that every $\F_i$ is diagonal, (\ref{eqn:obs_general2}) holds more generally by replacing the Katri--Rao products by Kronecker products and $\f_i$ by $\mbox{vec}(\F_i)$.
Finally, stacking equations \eqref{eqn:obs_general2} for all $1\leq i<j\leq G$ yields
\begin{equation}
\label{eqn:lineq_calib_reduced}
\bmcY(\mathbf{P})\f =\tin,
\end{equation}
with $\bmcY(\mathbf{P})$ defined as
\begin{equation}
\label{eqn:cursivey}
\underbrace{\begin{bmatrix}
(\Y_{2 \rw 1}^{T} * \mathbf{P}_1^{T}) &  -(\mathbf{P}_2^{T} * \Y^T_{1 \rw 2}) & 0 & \dots \\
(\Y_{3 \rw 1}^{T} * \mathbf{P}_1^{T}) & 0 &  -(\mathbf{P}_3^{T} * \mathbf{Y}^{T}_{1 \rw 3}) & \dots \\
0 &  (\Y_{3 \rw 2}^{T} * \mathbf{P}_2^{T}) &  -(\mathbf{P}_3^{T} * \mathbf{Y}^{T}_{2 \rw 3})   & \dots \\ 
\vdots & \vdots & \vdots & \ddots
\end{bmatrix}}_{(\sum_{j=2}^{G} \sum_{i=1}^{j-1} L_i L_j ) \times  M}.
\end{equation}

It is worth noting that this framework is not limited to represent single-side calibration. For UE-aided (relative) calibration, it suffices to set $2$ groups such as $A_1$ and $A_2$, representing the BS and the UE, respectively in order to get a full calibration scheme.

\subsection{Parameter identifiability and pilot design}
\label{sec:identifiability}

Before proposing an estimator for $\f$, we raise the question of the problem identifiability which corresponds to the fact that \eqref{eqn:lineq_calib_reduced} admits a unique solution in the noiseless scenario
\begin{equation}
\bmcY(\mathbf{P})\f = \mathbf{0}.
\label{eqn:identifiability}
\end{equation}
The solution of \eqref{eqn:identifiability} is defined up to a complex scalar factor $\alpha$, since if $\f$ is a solution, then $\alpha \f$ is also a solution of \eqref{eqn:identifiability}.
This indeterminacy can be resolved by fixing one of the calibration parameters, say $f_1 = \eo_1^H \f = [1\, 0\, \cdots\, 0] \f = 1$ or by a norm constraint, for example $\|\f\| = 1$.
Then, the identifiability is related to the dimension of the kernel of $\bmcY(\mathbf{P})$ in the sense that the problem is fully determined if and only if the kernel of $\bmcY(\mathbf{P})$ is of dimension $1$. Since the true $\f$ is a solution to \eqref{eqn:identifiability}, we know that the rank of $\bmcY(\mathbf{P})$ is at most $M-1$. We will assume furthermore in the following that the pilot design is such that the rows of $\bmcY(\mathbf{P})$ are linearly independent as long as the number of rows is less than $M-1$. Note that this condition depends on the internal channel realization $\C_{i \rw j}$ and on the pilot matrices $\mathbf{P}_i$. However, sufficient conditions of identifiability expressed on these matrices are out of the scope of this paper. 
Under rows independence, \eqref{eqn:lineq_calib_reduced} may be read as the following sequence of events:
\begin{enumerate}
\item Group $1$ broadcasts its pilots to all other groups using $L_1$ channel uses;
\item After group 2 transmits its pilots, we can formulate $L_2L_1$  equations of the form \eqref{eqn:obs_general2};
\item After group 3 transmits its pilots, we can formulate $L_3L_1 + L_3L_2$ equations;
\item After group j transmits its pilots, we can formulate $\sum_{i=1}^{j-1}L_jL_i$ equations.
\end{enumerate}
This process continues until group $G$ finishes its transmission, and the whole calibration process finishes.
During this process of transmission by the $G$ antenna groups, we can start forming equations as indicated, that can be solved recursively for subsets of unknown calibration parameters, or we can wait until all equations are formed to solve the problem jointly.
By independence of the rows, we can state that the problem is fully determined if and only if 
\beq
\sum_{1\leq i < j \leq G} L_j L_i   \ge M-1\; .
\label{eqident}\eeq

\subsection{LS calibration parameter estimation}
\label{sec_LS_estimation}

A typical way to estimate $\f$ consists in solving a LS problem such as
\begin{equation}
\label{eqn:general_LS_min}
\mbox{}\!\!\!\!\!\!\!\!
\begin{array}{lll}
\hat{\f} &\!\!\!\! =\!\!\!\! &\displaystyle\arg\min_{\f}\|\bmcY(\mathbf{P})\,\f\|^2\\
 &\!\!\!\! =\!\!\!\! &\displaystyle\arg\min_{\f} \displaystyle\sum_{i<j} \|(\Y_{j\rw i}^T*\mathbf{P}_i^T)\f_i - (\mathbf{P}_j^T*\Y_{i\rw j}^T)\f_j\|^2
\end{array},
\end{equation}
where $\bmcY(\mathbf{P})$ is defined in \eqref{eqn:cursivey}. 
This needs to be augmented with a constraint
\beq
\cC(\fh,\f)=0,
\label{eqn:genconstr}
\eeq
in order to exclude the trivial solution $\fh=\0$ in \eqref{eqn:general_LS_min}.
The constraint on $\fh$ may depend on the true parameters $\f$.
As we shall see further this constraint needs to be complex valued (which represents two real constraints).
Typical choices for the constraint are\\
1) Norm plus phase constraint (NPC):
\begin{eqnarray}
 & &\mbox{}\!\!\!\!\!\!\!\!\!\!\!\!\!\!\!\!\!\!
\mbox{norm: } \mbox{Re}\{\cC(\fh,\f)\}= ||\hat{\f}||^2 - c\, , \; c= ||\f||^2 ,
\label{eqRe}\\
 & &\mbox{}\!\!\!\!\!\!\!\!\!\!\!\!\!\!\!\!\!\!
\mbox{phase: } 
\mbox{Im}\{\cC(\fh,\f)\}=
\mbox{Im}\{\fh^H\f\}
 = 0.
\label{eqIm}
\end{eqnarray}
2) Linear constraint:
\beq
\cC(\fh,\f)= \hat{\f}^H\g -c = 0\; .
\label{eqn:LinConstr}
\eeq
If we choose the vector $\g=\f$ and $c= ||\f||^2$, then the $\mbox{Im}\{.\}$ part of \eqref{eqn:LinConstr} corresponds to \eqref{eqIm}.
The most popular linear constraint is the First Coefficient Constraint (FCC), which is \eqref{eqn:LinConstr} with $\g= \eo_1$, $c=1$.
The solution of \eqref{eqn:general_LS_min}, \eqref{eqn:LinConstr} is given by
\begin{equation}
\begin{array}{lll}
\fh&=& \arg \displaystyle\min_{\f : \f^H\g = c}\, \|\bmcY(\mathbf{P})\,\f\|^2 \\
&=& \displaystyle\frac{c}{\g^H (\bmcY(\mathbf{P})^H\bmcY(\mathbf{P}))^{-1}\g} (\bmcY(\mathbf{P})^H\bmcY(\mathbf{P}))^{-1}\g \, .
\label{eq:eqn_basic_vec6}
\end{array}
\end{equation}
Assuming a unit norm constraint (\eqref{eqRe} with $c=1$) on the other hand yields
\beq
\fh' = \arg\min_{\f : \|\f\| = 1} \|\bmcY(\mathbf{P})\,\f\|^2 = V_{min}(\bmcY(\mathbf{P})^H\bmcY(\mathbf{P})),
\label{eq:eqn_basic_vec7}
\eeq
where $V_{min}(\X)$ denotes the eigenvector of matrix $\X$ corresponding to its eigenvalue with the smallest magnitude.
Then the NPC solution of \eqref{eqn:general_LS_min}, \eqref{eqRe}, \eqref{eqIm} is $\fh = \sqrt{c}\, e^{j\phi} \fh'$ in which the phase $\phi$ is adjusted to satisfy \eqref{eqIm}, i.e. $\phi = \arg(\fh^{'H}\f)$ where for any complex number $z = |z| e^{j\arg(z)}$.

\section{Existing calibration techniques}
\label{sec:prior}

Different choices for the partitioning of the $M$ antennas and the pilots matrices exposed in Section~\ref{sec:uni_rep} lead to different calibration algorithms. 
We will now see how different estimators of the calibration matrix can be derived from \eqref{eqn:obs_general2}. In order to ease the description, we assume that the channel is constant during the whole calibration process, this assumption will later be relaxed and discussed in Section~\ref{sec:non_coh_accum}.

\subsection{Argos}
\label{subsec:Argos}

The Argos calibration method \cite{shepard2012argos} consists in performing bi-directional transmission between a carefully chosen reference antenna and the rest of the antenna array. This can be recast in our framework by considering $G=2$ sets of antennas, with set $A_1$ containing only the reference antenna ($M_1 = 1$), and set $A_2$ containing all the other antenna elements ($M_2 = M-1$), as shown in Fig.~\ref{fig:argos_calib}. Firstly, pilot $1$ is broadcasted from the reference antenna to all antennas in set $A_2$, thus $L_1=1$, $\mathbf{P}_1 = 1$ and $\f_2 = \bbm f_2,\dots, f_M\ebm^T$. Then, antennas in set $A_2$ successively transmit pilot $1$ to the reference antenna, thus $L_2=M-1$ and $\mathbf{P}_2 =\mathbf{I}_{M-1}$. \eqref{eqn:obs_general2} thus becomes
\beq
\label{eqn:obs_argos}
f_1\y_1 = \mbox{diag}(\y_2)\mathbf{f}_2 + \tilde{\n},
\eeq
where $\y_1 = \bbm y_{2\rw 1}& y_{3\rw 1}& \dots & y_{M\rw 1}\ebm^T$ and $\y_2 = \bbm y_{1\rw 2} & y_{1\rw 3} &  \dots &  y_{1\rw M}\ebm^T$ with $y_{i\rw j}$ representing the signal transmitted from antenna $i$ and received at antenna $j$. \eqref{eqn:obs_argos} can be decomposed into $M-1$ independent equations as $f_1 y_{i\rw 1} = f_i y_{1\rw i} + \tilde{n}_i$,
where $\tilde{n}_i$ is the $i^{th}$ element in the noise vector $\tilde{\n}$. The LS estimator for each element is thus
\beq
\label{eqn:estimator_argos}
f_i = f_1\frac{y_{i\rw 1}}{y_{1\rw i}}, \;\;\mbox{where} \;\; i = 2,3,\dots,M.
\eeq

\begin{figure}[!t]
\centering
\includegraphics[width=\columnwidth]{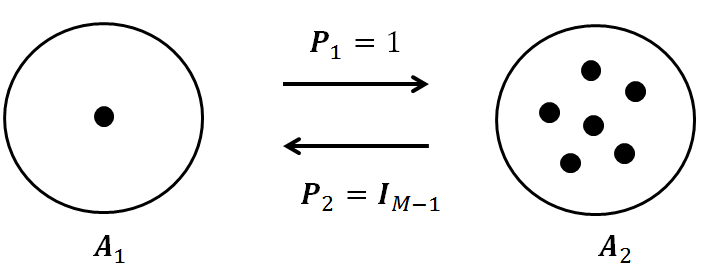}
\caption{Argos calibration}
\label{fig:argos_calib}
\end{figure}

\subsection{Methods based on successive single-antenna transmissions followed by joint estimation}
\label{subsec:LS}

The method from Rogalin et al. presented in \cite{rogalin2013hardware,rogalin2014scalable} and further analyzed in \cite{vieira2017reciprocity} is based on single-antenna transmission at each channel use; all received signals are subsequently taken into account through joint estimation of the calibration parameters. In order to represent this method within the unified framework, we define each set $A_i$ as containing only antenna $i$, i.e., $M_i = 1$ for $1\leq i\leq M$, as in Fig.~\ref{fig:rogalin_calib}.  
Since we assume that the channel is constant, this calibration procedure can be performed in a way that antennas can broadcast pilot 1 in a round-robin manner to all other antennas. In total, $M$ channel uses are needed to finish the transmission, making the pilots to be $\mathbf{P}_i=1$ (with $L_i=1$). With these pilot exchanges, \eqref{eqn:obs_general2} degrades to
\begin{equation}
\label{eqn:obs_LS}
y_{j\rw i}f_i - y_{i\rw j}f_j = \tilde{n}.
\end{equation}
Estimating the calibration coefficients can be performed using \eqref{eq:eqn_basic_vec6} or \eqref{eq:eqn_basic_vec7}. Let us use $\A$ to denote $\bmcY(\mathbf{P})^H\bmcY(\mathbf{P})$, its element on the $i^{th}$ row and $j^{th}$ column is then given by
\begin{equation}
A_{i,j} = \left\{
  \bal
  &\sum_{k\neq i}|y_{k\rightarrow i}|^2 &\mbox{for } & j = i, \\
  &-y_{j\rightarrow i}^*y_{i\rightarrow j} &\mbox{for } & j\neq i.
  \eal
\right.
\end{equation}
Assuming a unit norm constraint, the solution given by $V_{min}(\A)$ matches that in \cite{rogalin2014scalable} whereas the solution under FFC corresponds to that given in \cite{rogalin2013hardware}. Note, however, that calibration coefficients in \cite{rogalin2014scalable, rogalin2013hardware} are defined as the inverse of the $f_i$ in the current paper. 

\begin{figure}[!t]
\centering
\includegraphics[width=0.8\columnwidth]{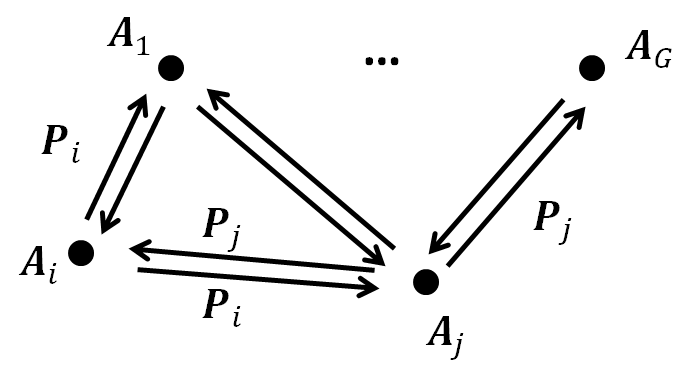}
\caption{Method of Rogalin et al. for reciprocity calibration. Not all links between elements are plotted.}
\label{fig:rogalin_calib}
\end{figure}

Other methods following the same single antenna partition scenario can be viewed as variants of the method above. For example, by allowing only the transmission between two neighboring antennas (antenna index difference is 1), \eqref{eqn:obs_LS} becomes $ f_iy_{i-1\rw i} = f_{i-1}y_{i\rw i-1} + \tilde{n}$. Thus, $ f_i = \frac{y_{i\rw i-1}}{y_{i-1 \rw i}} f_{i-1} + \tilde{n}$. By setting the first antenna as the reference antenna with $f_1=1$, we can obtain a daisy chain calibration method as in \cite{benzin2017internal}, although the original was presented as a hardware-based calibration.
Another variant considered in \cite{vieira2014reciprocity} consists in weighting the error metric such as $|\beta_{j\rw i}f_iy_{j\rw i} - \beta_{i\rw j}f_jy_{i\rw j}|^2$ where the weights $\beta_{j\rw i}$ and $\beta_{i\rw j}$ are based on the SNR of the intra-array channel between antenna element $i$ and $j$.

\subsection{Avalanche}
\label{subsec:Avalanche}

Avalanche \cite{papadopoulos2014avalanche} is a family of fast recursive calibration methods. 
The algorithm successively uses already calibrated parts of the antenna array to calibrate uncalibrated antennas which, once calibrated, are merged into the calibrated array. A full Avalanche calibration may be expressed under the unified framework by considering $M=\frac{1}{2}G(G-1)+1$ antennas where $G$ is the number of groups of antennas partitioning the set of antenna elements as follows: group $A_1$ contains antenna 1, group $A_2$ contains antenna 2, group $A_3$ contains antennas 3 and 4, etc. until group $A_G$ that contains the last $G-1$ antennas. In other terms, group $A_i$ contains $M_i=\max(1,i-1)$ antennas. Moreover, in the method proposed in \cite{papadopoulos2014avalanche}, each group $A_i$  uses $L_i=1$ channel use by sending the pilot $\mathbf{P}_i = \mathbf{1}_{M_i\times 1}$. An example with $7$ antenna elements partitioned into $4$ antenna groups, where we use group $1$, $2$, $3$ (assumed to be already calibrated) to calibrate group $4$, is shown in Fig.~\ref{fig:avalanche_calib}.
In this case, \eqref{eqn:obs_general2} becomes
\beq
\label{eqn:obs_avalanche}
(\y_{j\rw i}^T*\mathbf{P}_i^T)\f_i - (\mathbf{P}_j^T*\y_{i\rw j}^T)\f_j = \tin_{ij}.
\eeq
In \cite{papadopoulos2014avalanche}, the authors exploited an online version of the LS estimator using previously estimated calibration parameters $\mathbf{\hat{f}}_1,\dots,\mathbf{\hat{f}}_{i-1}$ by minimizing
\begin{eqnarray}
\label{eqn:est_avalanche}
\mathbf{\hat{f}}_i &=& \underset{\mathbf{f}_i}{\arg\min} \sum_{j=1}^{i-1} \left\|(\y_{j\rw i}^T*\mathbf{P}_i^T)\f_i - (\mathbf{P}_j^T*\y_{i\rw j}^T)\hat{\f}_j\right\|^2 \nonumber \\
&=& (\Y_i^H\Y_i)^{-1}\Y_i^H\mathbf{a}_i,
\end{eqnarray}
where
$\Y_i = \bbm \mathbf{y}_{1\rw i} & \mathbf{y}_{2\rw i} &\dots & \mathbf{y}_{i-1\rw i} \ebm^T\in\mathbb{C}^{(i-1)\times M_i}$,
and
$\mathbf{a}_i = [(\mathbf{P}_1^T*\y_{i\rw 1}^T)\hat{\f}_1,\dots, (\mathbf{P}_{i-1}^T*\y_{i\rw i-1}^T)\hat{\f}_{i-1}]\in\mathbb{C}^{(i-1)\times 1}$.
Two things should be noted, firstly, $\f_{1},\dots,\f_{i-1}$ are replaced by their estimated version which causes error propagation: estimation errors on a given calibration coefficient will propagate to subsequently calibrated antenna elements. Secondly, in order for \eqref{eqn:est_avalanche} to be well-defined, i.e., in order for $\Y_i^H\Y_i$ to be invertible, it is necessary that $M_i \leqslant i-1$. Note that this necessary condition is specific to the considered online LS estimator \eqref{eqn:est_avalanche} and is more restrictive than the identifiability condition exposed in Section \ref{sec:identifiability}.

\begin{figure}[!t]
\centering
\includegraphics[width=\columnwidth]{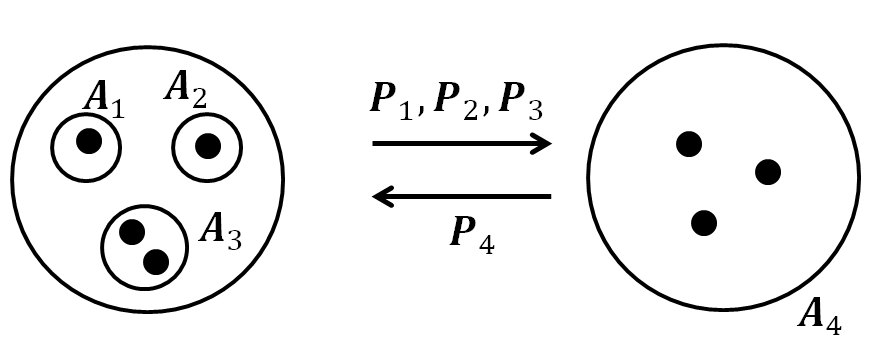}
\caption{Example of full Avalanche calibration with $7$ antennas partitioned into $4$ groups. Group $1$, $2$, $3$ have already been calibrated, and group $4$ is to be calibrated.}
\label{fig:avalanche_calib}
\end{figure}

\section{Fast Calibration: optimal antenna grouping}
\label{sec:fast_calib}

The general calibration framework in Section~\ref{sec:uni_rep} opens up possibilities for new calibration schemes by using new ways to group up antennas.
In this section we show that considering groups of antennas can potentially reduce the total number of channel uses necessary for calibration; we derive the theoretical limit on the smallest number of groups (and associated channel uses) needed to perform calibration.

We first address the problem of finding the smallest number of groups enabling calibration of the whole array while ensuring identifiability at each step,  by finding the best choices for the $L_{i}$ in order to see to what extent optimizing the group based calibration can speed up the calibration process. Let us consider the case where the total number of channel uses available for calibration is fixed to $K$. We derive the number of pilot transmissions for each group, $L_1,\dots,L_G$, that would maximize the total number of antennas that can be calibrated, i.e.,
\begin{equation}
\max_{(L_1,\dots,L_G)}{\left[\sum_{j=2}^{G} \sum_{i=1}^{j-1} L_j L_i+1\right]},\;\;\mbox{subject to} \;\;\sum_{i=1}^{G} L_i = K.
\end{equation}
As shown in Appendix \ref{app:optimal_grouping}, the solution of this discrete optimization problem is attained when the number of pilot transmissions for each group is equal to $1$, i.e., $L_i=1$ for any $i$ and $G=K$; note that the Avalanche approach is optimal in this sense.  In this case, the number of antennas that can be calibrated is $\frac{1}{2}G(G-1)+1$. Thus, for a given array size $M$, the number of channel uses grows only of the order of $\sqrt{M}$, which is faster than $\mathcal{O}{(M)}$ in Argos and the method of Rogalin et al.\footnote{The number of channel uses needed by the method in \cite{rogalin2014scalable} is $M$ if we perform round-robin broadcasting for each antenna assuming that the all channels between antennas are constant during the whole calibration process whereas it would be $\mathcal{O}{(M^2)}$ if we perform bi-directional transmission independently for each antenna pair. Please refer to Section~\ref{sec:non_coh_accum} for more details.} \cite{rogalin2014scalable}. 
Remark also that it is not necessary for the groups to be of equal size.

\section{Optimal estimation and performance limits}
\label{sec:crb}

In order to derive estimation error bounds for the reciprocity parameters, we should not exclude a priori any data obtained during the training phase, which is what we shall assume here.
In this section, we derive the CRB and associated ML estimation for the unified calibration scheme based on antenna partition. 
In order to obtain tractable results, we rely on a bilinear model to represent the calibration process.
From \eqref{eqn:sig_mod3}, we have
\begin{equation}
\begin{aligned}
 \Y_{i \rw j}  &=  \R_j \C_{i \rw j} \T_i \mathbf{P}_{i} +  \N_{i \rw j} \\
&=   \underbrace{\R_j \C_{i \rw j} \R_i^T}_{\bm{\mathcal{H}}_{i \rw j}} \F_{i} \mathbf{P}_{i} +   \N_{i \rw j}, \\
\end{aligned}
\end{equation}
where $ \F_i = \R_i^{-T}\T_i$ is the calibration matrix for group $i$.
We define $\bm{\mathcal{H}}_{i \rw j} = \R_j \C_{i \rw j} \R_i^T$ to be a auxiliary internal channel (not corresponding to any physically measurable quantity) that appears as a nuisance parameter in the estimation of the calibration parameters. 
Note that the auxiliary channel $\bm{\mathcal{H}}_{i \rw j}$ inherits the reciprocity from the OTA channel $\C_{i \rw j}$:
$\bm{\mathcal{H}}_{i \rw j} = \bm{\mathcal{H}}_{j \rw i}^T$.
Upon applying the vectorization operator for each bidirectional transmission between groups $i$ and $j$, we have, similarly to \eqref{eqn:lineq_calib_reduced}
\begin{equation}\label{Yij}
\mbox{vec}(\Y_{i \rw j})  =  (\mathbf{P}_i^{T} * \bm{\mathcal{H}}_{i \rw j})\, \mathbf{f}_{i} +   \mbox{vec}(\N_{i \rw j}).
\end{equation}
On the reverse direction, using $\bm{\mathcal{H}}_{i \rw j} = \bm{\mathcal{H}}_{j \rw i}^T$, we have
\begin{equation}\label{YjiT}
\mbox{vec}(\Y_{j \rw i}^T ) =  (\bm{\mathcal{H}}_{i \rw j}^{T} * \mathbf{P}_{j}^{T}) \f_j + \mbox{vec}(\N_{j \rw i})^T.
\end{equation}
Alternatively, \eqref{Yij} and \eqref{YjiT} may also be written as
\begin{equation}
\left\{
\begin{aligned}
\mbox{vec}(\Y_{i \rw j})  &= \left[(\F_i  \mathbf{P}_i)^T \otimes \I\right]\mbox{vec}(\bm{\mathcal{H}}_{i \rw j}) +   \mbox{vec}(\N_{i \rw j}) \\
\mbox{vec}(\Y_{j \rw i}^T)  &=\left[\I \otimes (\mathbf{P}_j^T\F_j)\right]\mbox{vec}(\bm{\mathcal{H}}_{i \rw j}) +   \mbox{vec}(\N_{j \rw i}).
\end{aligned}
\right.
\label{eqstackRxsignal}
\end{equation}
Stacking these observations into a vector
$\y = \left[  \mbox{vec}(\Y_{1 \rw 2})^{T} \, \mbox{vec}(\Y_{2 \rw 1}^T)^T \,  \mbox{vec}(\Y_{1 \rw 3})^{T}\dots \right]^T$, the above two alternative formulations can be summarized into
\begin{equation}
\label{eq:CRB_formulation}
\begin{aligned}
\y &=  \bm{\mathcal{H}}(\h,\mathbf{P})\f +  \n \\
              &=  \bm{\mathcal{F}}(\f,\mathbf{P})\h + \n ,
\end{aligned}
\end{equation}
where $\h = \left[\mbox{vec}(\bm{\mathcal{H}}_{1\rw 2})^T \, \mbox{vec}(\bm{\mathcal{H}}_{1\rw 3})^T \, \mbox{vec}(\bm{\mathcal{H}}_{2 \rw 3})^T \dots \right]^{T}$, and $\n$ is the corresponding noise vector. 
The composite matrices $\bm{\mathcal{H}}$ and $\bm{\mathcal{F}}$ are given by,
\begin{equation}
\label{eqn:crb_defns}
\begin{aligned}
 \bm{\mathcal{H}}(\mathbf{h},\mathbf{P}) &= \!\!\begin{bmatrix}
 \mathbf{P}_{1}^{T} * \bm{\mathcal{H}}_{1 \to 2} & 0 & 0 & \dots \\
 0 & \bm{\mathcal{H}}_{1 \to 2}^{T} * \mathbf{P}_{2}^{T} & 0 & \dots \\
 \mathbf{P}_{1}^{T} * \bm{\mathcal{H}}_{1 \to 3}\!\! & 0 & 0 & \dots \\
 0 & 0 & \bm{\mathcal{H}}_{1 \to 3}^{T} * \mathbf{P}_{3}^{T}  & \dots \\
\vdots &  \vdots &  \vdots &  \ddots 
\end{bmatrix} \\
\bm{\mathcal{F}}(\f,\mathbf{P}) &= \!\!\begin{bmatrix}
\mathbf{P}_{1}^{T}\mathbf{F}_{1}   \otimes \I & 0 & 0& 0 & \dots \\
\I \otimes   \mathbf{P}_{2}^{T}\F_{2}  & 0 & 0& 0 & \dots \\
 0 &  \mathbf{P}_{1}^{T} \F_{1}  \otimes \I & 0& 0  & \dots \\
0&  \I \otimes   \mathbf{P}_{3}^{T}\mathbf{F}_{3}  & 0& 0 & \dots \\
0& 0 &  \mathbf{P}_{2}^{T} \F_{2}  \otimes \I &  0  & \dots \\
0& 0&  \mathbf{I} \otimes   \mathbf{P}_{3}^{T}\F_{3}  &  0 & \dots \\
\vdots &  \vdots &  \vdots &  \vdots& \ddots 
\end{bmatrix}.
\end{aligned}
\end{equation}
The scenario is now identical to that encountered in some blind channel estimation scenarios and hence we can take advantage of some existing tools \cite{semi2000carvalho},\cite{carvalho2012PQML}, which we exploit next.

\subsection{Cram\'er-Rao bound}
\label{subsec:cramer_rao_bound}

Treating $\mathbf{h}$ and $\mathbf{f}$ as deterministic unknown parameters, and assuming that the receiver noise $\n$ is  distributed as  
$\mathcal{CN}(0,\sigma^2 \mathbf{I})$, the Fisher Information Matrix (FIM) $\J$ for jointly estimating $\f$ and $\h$ can immediately be obtained from
\eqref{eq:CRB_formulation} as 
\begin{equation}
\begin{aligned}
\J = \frac{1}{\sigma^2} \begin{bmatrix}
\bm{\mathcal{H}}^H \\
\bm{\mathcal{F}}^H
\end{bmatrix}
\begin{bmatrix}
\bm{\mathcal{H}} \quad
\bm{\mathcal{F}}
\end{bmatrix}.
\end{aligned}
\label{eqFIM}
\end{equation}
The computation of the CRB requires $\mathbf{J}$ to be non-singular. However, for the problem at hand, $\mathbf{J}$ is inherently singular. In fact, the calibration factors (and the auxiliary channel) can only be estimated up to a complex scale factor since the received data \eqref{eq:CRB_formulation}  involves the product of the channel and the calibration factors, 
$\bm{\mathcal{H}} \f = \bm{\mathcal{F}} \h$. As a result the FIM has the following null space \cite{carvalho2004ident},\cite{deCarvalho99blindCRB}
\begin{equation}
\begin{aligned}
\mathbf{J} \begin{bmatrix} \f \\
-\h \end{bmatrix} =  \frac{1}{\sigma^2} \begin{bmatrix}
\bm{\mathcal{H}} & \bm{\mathcal{F}}
\end{bmatrix}^{H}
\; 
(
\bm{\mathcal{H}} \f -
\bm{\mathcal{F}} \h
) = \mathbf{0}.
\end{aligned}
\end{equation}
To determine the CRB when the FIM is singular, constraints have to be added to regularize the estimation problem. 
As the calibration parameters are complex, one complex constraint corresponds to two real constraints. 
Another issue is that we are mainly interested in the CRB for $\f$, the parameters of interest, in the presence of the nuisance parameters $\h$.
Hence we are only interested in the $(1,1)$ block of the inverse of the $2\times 2$ block matrix $\mathbf{J}$ in \eqref{eqFIM}.
Incorporating the effect of the constraint \eqref{eqn:genconstr} on $\f$, we can derive from \cite{deCarvalho99blindCRB} the following constrained CRB for $\f$
\begin{equation}
\mbox{CRB}_{\f} = \sigma^2\cV_{\f} \left( \cV_{\f}^H\bmcH^{H}\bmcP_{\bmcF}^{\perp}\bmcH \cV_{\f}\right)^{-1}\cV_{\f}^H
\label{eqCRBf}
\end{equation}
where 
$\bmcP_{\X} = \X(\X^H\X)^{\dagger}\X^H$ and $\bmcP_{\X}^{\perp} = \I - \bmcP_{\X}$ are the projection operators on resp. the column space of matrix $\X$ and its orthogonal complement, and $\dagger$ corresponds to the Moore-Penrose pseudo inverse. Note that in some group calibration scenarios, $\bm{\mathcal{F}}^{H}\bm{\mathcal{F}}$ can be singular (i.e, $\h$ could be not identifiable even if $\f$ is identifiable or even known).
The $M\times (M\! -\! 1)$ matrix $\cV_{\f}$ is such that its column space spans the orthogonal complement of that of $\frac{\partial \cC(f)}{\partial \f^*}$, i.e.,
$
\bmcP_{\cV_{\f}} = \bmcP_{\frac{\partial \cC}{\partial \f^*}}^{\perp}
$.

It is shown in \cite{carvalho2004ident},\cite{deCarvalho99blindCRB},\cite{carvalho2000cramer} that a choice of constraints such that their linearized version 
$\frac{\partial \cC}{\partial \f^*}$
fills up the null space of the FIM results in the lowest CRB, while not adding information in subspaces where the data provides information.
One such choice is the set \eqref{eqRe}, \eqref{eqIm} (NPC). Another choice is \eqref{eqn:LinConstr} with $\g=\f$.
With such constraints, $\frac{\partial \cC}{\partial \f^*}\sim\f$ which spans the null space of $\bmcH^{H}\bmcP_{\bmcF}^{\perp}\bmcH$.
The  CRB then corresponds to the pseudo inverse of the FIM and \eqref{eqCRBf} becomes
\begin{equation}
\mbox{CRB}_{\f} = {\sigma^2}\left( \bmcH^{H}\bmcP_{\bmcF}^{\perp}\bmcH \right)^{\dagger}\, .
\end{equation}
If the FCC constraint is used instead (i.e., \eqref{eqn:LinConstr} with $\g=\eo_1$, $c=1$), the corresponding CRB is \eqref{eqCRBf} 
where $\cV_{\f}$ corresponds now to an identity matrix without the first column (and hence its column space is the orthogonal complement of that of $\mathbf{e}_1$). 

Note that \cite{vieira2017reciprocity} also addresses the CRB for a scenario where transmission happens one antenna at a time. 
The relative calibration factors are derived from the absolute Tx and Rx side calibration parameters, which become identifiable because a model is introduced for the internal propagation channel. 
In this Gaussian prior the mean is taken as the line of sight (LoS) component (distance induced delay and attenuation) and complex Gaussian non-LoS (NLOS) components are contributing to the covariance of this channel as a scaled identity matrix. 
The scale factor is taken 60dB below the mean channel power. This implies an almost deterministic prior for the (almost known) channel and would result in underestimation of the CRB, as noted in \cite[Sec.~III-E-2]{vieira2017reciprocity}. 

\subsection{Maximum likelihood estimation}

We now turn our focus to the design of an optimal estimator. From \eqref{eq:CRB_formulation} we get the negative log-likelihood up to an additive constant, as
\beq
\frac{1}{\sigma^2} \left\|\y  - \bm{\mathcal{H}}(\h,\mathbf{P})\f \right\|^2 = 
\frac{1}{\sigma^2} \left\|\y  - \bm{\mathcal{F}}(\f,\mathbf{P})\h \right\|^2 \; .
\label{eqML1}
\eeq
The maximum likelihood estimator of $(\h,\f)$, obtained by minimizing \eqref{eqML1}, can be computed using alternating optimization on $\h$ and $\f$, which leads to a sequence of quadratic problems. As a result, for given $\f$, we find
$\hh  = (\bm{\mathcal{F}}^{H}\bm{\mathcal{F}})^{-1} \bm{\mathcal{F}}^{H}\y$ and for given $\h$, we find
$\fh  = (\bm{\mathcal{H}}^{H}\bm{\mathcal{H}})^{-1} \bm{\mathcal{H}}^{H}\y$. This leads to the Alternating Maximum Likelihood (AML) algorithm (Algorithm \ref{alg_aml})  \cite{semi2000carvalho,carvalho2012PQML} which  iteratively maximizes the likelihood by alternating between the desired parameters $\f$ and the nuisance parameters $\h$ for the formulation \eqref{eq:CRB_formulation}\footnote{The method used in \cite{vieira2017reciprocity} to derive the ML estimator, although called ``Expectation Maximization'' in the original paper,  actually corresponds to the AML scheme, but using quadratic regularization terms for both $\f$ and $\h$ which can be interpreted as Gaussian priors and which may improve estimation in ill-conditioned cases.}.

\begin{algorithm}
\caption{Alternating maximum likelihood (AML)}
\label{alg_aml}
\begin{algorithmic}[1]
\STATE \textbf{Initialization:} Initialize $\fh$ using existing calibration methods (e.g. the method in \ref{subsec:LS}) or as a vector of all 1's.
\REPEAT
\STATE Construct $\bm{\mathcal{F}}$ as in \eqref{eqn:crb_defns} using $\fh$.\\
           $\hh     = (\bm{\mathcal{F}}^{H}\bm{\mathcal{F}})^{-1} \bm{\mathcal{F}}^{H}\tilde{\y}$ \\
\STATE Construct $\bm{\mathcal{H}}$ as in \eqref{eqn:crb_defns} using $\hh$.\\
           $\fh      =  (\bm{\mathcal{H}}^{H}\bm{\mathcal{H}})^{-1} \bm{\mathcal{H}}^{H} \tilde{\y}$
\UNTIL{the difference on the calculated $\hat{\f}$ between two iterations is small enough.}
\end{algorithmic}
\end{algorithm}

\subsection{Maximum likelihood vs. least squares}

At first, it would seem that the ML and CRB formulations above are unrelated to the LS method introduced in Section~\ref{sec:uni_rep} and used in most existing works. 
However, consider again the received signal in a pair $(i,j)$ as in \eqref{eqstackRxsignal}.
Eliminating the common auxiliary channel $\bmcH_{i \rw j}$, we get the elementary equation \eqref{eqn:obs_general} for the LS method \eqref{eq:eqn_basic_vec6} or \eqref{eq:eqn_basic_vec7}. 
Equivalently to \eqref{eqn:lineq_calib_reduced}, one obtains
\begin{equation}
\label{eqn:lineq_calib_reduced2}
\bmcY(\mathbf{P})\f = \bmcF^{\perp H}\y = \tin,
\end{equation}
where 
\begin{equation}
\label{eqn:Pperp}
\mbox{}\!\!
\bmcF^{\perp}\!=\! \begin{bmatrix}
\I\otimes (\F_2\bP_2)^* \!\!\!\! & 0 & 0& 0 & \dots \\
-(\F_1\bP_1)^*\otimes \I\!\!\!\!\!  & 0 & 0& 0 & \dots \\
 0 &  \I\otimes(\F_3\bP_3)^*\!\!\!\! & 0& 0  & \dots \\
0&  -(\F_1\bP_1)^*\otimes\I \!\!\!\!\! & 0& 0 & \dots \\
0& 0 &  \I\otimes(\F_3\bP_3)^*\!\!  &  0  & \dots \\
0& 0&  -(\F_2\bP_2)^*\otimes\I \!\!\! &  0 & \dots \\
\vdots &  \vdots &  \vdots &  \vdots& \ddots 
\end{bmatrix},
\end{equation}
such that the column space of $\bmcF^{\perp}$ corresponds to the orthogonal complement of the column space of $\bmcF$ (see Appendix~\ref{app:Fperp}) assuming that either $M_i\geq L_i$ or $L_i\geq M_i$ for all $1\leq i\leq G$.
Now, the ML criterion in \eqref{eqML1} is separable in  $\f$ and $\h$. Optimizing 
\eqref{eqML1} w.r.t. $\h$ leads to $\h  = (\bm{\mathcal{F}}^{H}\bm{\mathcal{F}})^{\dagger} \bm{\mathcal{F}}^{H}\y$ 
as mentioned earlier. Substituting this estimate for $\h$ into \eqref{eqML1} yields a ML estimator $\hat{\f}$ minimizing
\beq
\y^H \bmcP_{\bmcF}^{\perp}\y = \y^H \bmcP_{\bmcF^{\perp}}\y = \y^H \bmcF^{\perp}(\bmcF^{\perp H}\bmcF^{\perp})^{\dagger}\bmcF^{\perp H}\y,
\label{eqML-LS1}
\eeq
where we used $\bmcP_{\bmcF}^{\perp} = \bmcP_{\bmcF^{\perp}}$.
This should be compared to the least-squares method which consists in minimizing $\|\bmcF^{\perp H}\y\|^2 = \| \bmcY\f\|^2$ 
in \eqref{eq:eqn_basic_vec6} or \eqref{eq:eqn_basic_vec7}.
Hence \eqref{eqML-LS1} can be interpreted as an optimally weighted least-squares method since from \eqref{eq:CRB_formulation}
$\bmcF^{\perp H}\y = \bmcF^{\perp H}\n = \tin$ leads to colored noise with covariance matrix
$ \sigma^2 \bmcF^{\perp H}\bmcF^{\perp}$.
The compressed log-likelihood in \eqref{eqML-LS1} can now be optimized using a variety of iterative techniques such as
Iterative Quadratic ML (IQML), Denoised IQML (DIQML) or Pseudo-Quadratic ML (PQML) \cite{carvalho2012PQML},
and initialized with the least-squares method.
It is not clear though whether accounting for the optimal weighting in ML would lead to significant gains in performance.
The weighting matrix (before inversion) $ \bmcF^{\perp H}\bmcF^{\perp}$ is block diagonal with a square block corresponding to
the pair of antenna groups $(i,j)$ being of dimension $L_iL_j$. 
If all $L_i=1$, then $ \bmcF^{\perp H}\bmcF^{\perp}$ is a diagonal matrix. If furthermore all $M_i=1$ (groups of isolated antennas),
all pilots are of equal magnitude, and if all calibration factors would be of equal magnitude, then 
$ \bmcF^{\perp H}\bmcF^{\perp}$ would be just a multipe of identity and hence would not represent any weighting.
We shall leave this topic for further exploration.
In any case, the fact that the CRB derived above and the ML and LS methods are all based on the signal model
\eqref{eq:CRB_formulation} shows that, 
the CRB above is the appropriate CRB for the estimation methods discussed here.

\subsection{Calibration bias at low SNR}
\label{sec_calibration_bias}

Whereas the CRB applies to unbiased estimators, at low SNR the estimators are biased which turns out to lead to mean square error (MSE) saturation.
In the case of a norm constraint, $\|\fh\|^2 = \|\f\|^2$, due to the triangle inequality
\beq
\|\fh-\f\| \leq \|\fh\| + \|\f\| = 2\|\f\|,
\eeq
MSE $ = \mathbb{E}[\|\fh-\f\|^2] \leq 4 \|\f\|^2$. 
However, MSE saturation occurs also in the case of a linear constraint.
We shall provide here only some brief arguments.
For a linear constraint of the form \eqref{eqn:LinConstr}, 
the least-squares method leads to \eqref{eq:eqn_basic_vec6}.
As the SNR decreases, the noise part $\bmcN$ of $\bmcY$ will eventually dominate $\bmcY$.
Hence $ \fh = \frac{c}{\g^H (\bmcN^H\bmcN)^{-1}\g} (\bmcN^H\bmcN)^{-1}\g$ in which the coefficients
as LS estimation coefficients will tend to be bounded.
To take a short-cut, consider replacing $\bmcN^H\bmcN$ by its mean $\mathbb{E}[\bmcN^H\bmcN] = c'\, \I$. Then we get 
$ \fh = \frac{c}{\g^H\g} \g$ which is clearly bounded. Hence $\fh$ will be strongly biased with bounded MSE.



\section{Non-coherent accumulation}
\label{sec:non_coh_accum}

\subsection{Overview}

We have assumed in Sections \ref{sec:uni_rep} and \ref{sec:prior} that the channel is constant during the whole calibration process, which may become questionable if the number of antennas becomes very large since more time is then needed to accomplish the whole calibration process. As a consequence, it is possible that we cannot accumulate enough observations within a single channel coherence time and frequency block. In this section, we consider such calibration algorithms, which can jointly use data accumulated during several independent fades of the OTA channel; since the requirement to calibrate during a single coherence interval of the channel is lifted, we denote this by \emph{non-coherent} accumulation of calibration data. Such approaches are essential for the calibration of massive MIMO systems.

Let us consider the method of Rogalin et al. as an example. If the channel is constant during the whole calibration process, one can readily use the (coherent) method detailed in Section~\ref{subsec:LS}, broadcasting pilots from each antenna in a round-robin manner when all other antennas are listening, thus $M$ slots are needed to accomplish the whole process. On the other hand, if the coherence time is not large enough, a non-coherent way to accumulate observations can be performing bi-directional transmissions for each antenna pair independently (in this case, we only require that the forward and backward transmissions are performed during the same coherence slot for each antenna pair); this requires therefore $M(M-1)$ slots. Here, we see that the non-coherent accumulation is enabled at the cost of spending more resources on calibration ($M(M-1)$ transmissions vs. $M$ transmissions for the coherent case).
Some papers also implicitely use non-coherent accumulations; see for example \cite{jiang2016novel} who derives a Total Least-Squares (TLS) estimator from such measurements.

Let us extend the signal model in Section~\ref{sec:uni_rep} by allowing to accumulate measurements over several time slots beyond the channel coherence time (the channel can only be assumed constant within each slot, not necessarily across the slots). We assume that these are indexed by $1\leq t\leq T$, so that $T$ represents the number of coherent slots at disposal. Clearly, the OTA reciprocity equation $\C_{i\rw j} = \C_{j\rw i}^T$ holds only for measurements obtained during the same time slot. However, measurements related to several groups of antennas obtained during multiple non-coherent time slots can be successfully combined to perform joint calibration of the complete array, as shown next.
Let us assume that, during a given coherent slot $t$, a subset $\mathcal{G}(t)$ of the groups forming the partition of the array transmit training signals; we require that $\mathcal{G}(t)$ has at least two elements. When group $A_i$, $i \in \mathcal{G}(t)$ is transmitting, the received signal at group $A_j$, $j \in \mathcal{G}(t)$, $j\neq i$ is written as $\Y_{j\rw i,t} = \R_j\C_{i\rw j, t}\T_i\mathbf{{P}}_{i,t} + \N_{j,t}$, and $\Y_{i\rw j,t}$ is defined similarly.
\eqref{eqn:obs_general2} then becomes
\begin{equation}
\label{eqn:obs_non_coh}
(\Y_{j\rw i,t}^T*\mathbf{P}_{i,t}^T)\f_i - (\mathbf{P}_{j,t}^T*\Y_{i\rw j,t}^T)\f_j = \tin_{ij,t}.
\end{equation}
Stacking these equations similarly to \eqref{eqn:lineq_calib_reduced}, but with respect to the $i,j \in \mathcal{G}(t)$, gives $\bmcY_t(\mathbf{P}_t)\f =\tin_t$ for each time slot $t$.

\subsection{LS estimation}

The LS estimator of the calibration matrix is thus, taking into account all observations accumulated over the $T$ slots,
\beq
\label{eqn:non_coh_acc}
\bal
 \hat{\f}  &= \arg\min_{\f}\sum_{t=1}^{T} \!\sum_{\stackrel{i,j \in \mathcal{G}(t)}{i\neq j}}\!\left\|(\Y_{j\rw i,t}^T\! *\!\mathbf{P}_{i,t}^T)\f_i - (\mathbf{P}_{j,t}^T\! *\!\Y_{i\rw j,t}^T)\f_j \right\|^2\\
&= \arg\min_{\f}\|\bmcY(\mathbf{P})\f \|^2,
\eal
\eeq
where the minimum is taken either under the constraint $f_1=1$ or $\|\f\|=1$ and $\bmcY(\mathbf{P})=[\bmcY_1(\mathbf{P}_1)^T,\dots,\bmcY_T(\mathbf{P}_T)^T]^T$. Therefore, the approach of \eqref{eqn:non_coh_acc} is very similar to \eqref{eq:eqn_basic_vec6} and \eqref{eq:eqn_basic_vec7}. This shows that calibration using a joint estimator based on non-coherent measurements can be readily implemented by making sure that the measurements $\Y_{j\rw i,t}$ and $\Y_{i\rw j,t}$ appearing in each term of the sum above have been obtained during the same coherence interval. Note also that this approach can allow to collect multiple measurements across independent channel fades between the same pair $(i,j)$ of antenna groups, hence providing a way to increase the accuracy (by averaging over multiple noise realizations) and robustness (by minimizing the effect of a single catastrophic realization of the internal channel which could yield a rank-deficient set of linear equations for a given $t$) of the estimator.

\subsection{Optimal grouping}

Statements similar to those in Section~\ref{sec:fast_calib} can be made for non-coherent group-based fast calibration. The maximization proposed in Section~\ref{sec:fast_calib} is still valid in this context leading to an optimal number of groups equal to the number of coherent slots $G=K$. Therefore, since $\frac{1}{2}K(K-1)$ independent rows in $\bmcY(\mathbf{P})$ are accumulated per coherent slot, if we fix the number of antennas to $M$, the number of coherent slots $T$ should satisfy $\frac{T}{2}K(K-1)\geq M-1$ in order to calibrate all antenna elements. Note that the total number of calibrated antennas, equal to $\frac{T}{2}K(K-1)+1$, is linear in $T$ and quadratic in $K$, which confirms that it is more valuable to perform coherent measurements in order to speed up the calibration process. However, non-coherent accumulations allow to perform measurements sparsely in time. Such a calibration process can be interleaved with the normal data transmission or reception, leading to vanishing resource overhead.

\section{Numerical Validation}
\label{sec:simu_result}

In this section, we assess numerically the performance of various calibration algorithms and also compare them against their CRBs.
We first evaluate the proposed group-based fast calibration method from Section~\ref{sec:fast_calib}. 
We use $\mbox{MSE} = \mathbb{E}[\|\fh-\f\|^2]$ as the performance evaluation metric and the CRB as benchmark.
The Tx and Rx calibration parameters for the BS antennas are assumed to have random phases uniformly distributed over 
$[-\pi,\pi]$ and amplitudes uniformly distributed in the range $[1-\delta,1+\delta]$ where $\delta=0.1$. Except for the first coefficients which are fixed to 1 so that $f_1=1$ for the true $\f$.
In this way, regardless of whether the FCC or the NPC 
(i.e.\ \eqref{eqRe},\eqref{eqIm} with $c=||\f||^2$)
constraints are used, direct comparison of $\fh$ to $\f$ is possible for the MSE computation (in which the expectation is replaced by sample averaging).
For a fair comparison across different schemes, the number of channel uses should be the same. 
Hence, we compare the fast calibration method of Section~\ref{sec:fast_calib} against the Avalanche scheme proposed in \cite{papadopoulos2014avalanche}. 
Note that the Argos and Rogalin methods are not fast algorithms as they need channel uses of the order of $M$, so they cannot be compared with the fast calibration methods.
The number of antennas that transmit at each time instant (i.e. the group sizes of the 12 antenna groups) is shown in Table \ref{tab:num_ant_txn_per_chan_use}. FC-I corresponds to a fast calibration scheme where the antenna grouping is exactly the same as that of Avalanche. However, we also try a more equally partitioned grouping of antennas in FC-II. The pilots used for transmission have unit magnitudes with uniform random phases in $[-\pi,\pi]$. 
The channels between all the BS antennas are assumed to be i.i.d. Rayleigh fading.

\begin{table}[h!]
  \centering
  \caption{Number of antennas transmitting at each channel use for two Fast Calibration schemes.}
  \label{tab:num_ant_txn_per_chan_use}
  \begin{tabular}{|l|c|c|c|c|c|c|c|c|c|c|c|c|}
  \hline
  Scheme &  \multicolumn{12}{|c|}{Antennas transmitting per channel use. $M=64$} \\
  \hline
  FC-I          & 1 & 1 & 2 & 3 & 4 & 5 & 6 & 7 & 8 & 9 & 10 & 8\\
  \hline
  FC- II       & 5 & 5 & 5 & 5 & 5 & 5 & 5 & 5 & 6 & 6 & 6 & 6\\
  \hline
  Scheme &  \multicolumn{12}{|c|}{Antennas transmitting per channel use. $M=67$} \\
  \hline
  FC-I          & 1 & 1 & 2 & 3 & 4 & 5 & 6 & 7 & 8 & 9 & 10 & 11\\
  \hline
  FC- II       & 5 & 5 & 5 & 5 & 5 & 6 & 6 & 6 & 6 & 6 & 6 & 6\\
  \hline
  \end{tabular}
\end{table}

The performance of these schemes is depicted in Fig.~\ref{fig:MSR_fast_vs_avalanche} for $M=64$. 
From Section~\ref{sec:fast_calib}, it can be seen that the minimal number of channel uses required for calibration is $G=12 = \lceil\sqrt{2M}\rceil$. The performance is averaged over 500 realizations of channel and calibration parameters. 
Note that the Avalanche algorithm inherently uses the FCC in its estimation process. For comparison to methods using NPC, the Avalanche estimate $\fh$ is then rescaled in order to satisfy the NPC constraint.

As the CRB depends on the constraint used for calibration estimation, the corresponding CRBs for these approaches are also shown. However, note that the CRB  for the FC-I grouping applies to both the Avalanche method and the proposed fast calibration method (which performs least-squares \eqref{eqn:general_LS_min} over all available data jointly).  For each type of constraint, there are thus 3 MSE curves (Avalanche, FC-I and FC-II) and 2 CRB curves (for FC-I and FC-II). As the MSE curve is averaged over multiple channel realizations, the CRB plotted here is also an average over the CRB values corresponding to these channel realizations. 

In Fig.~\ref{fig:MSR_fast_vs_avalanche}, the performance of the proposed fast calibration with the FC-I grouping outperforms that of the Avalanche scheme. With $M=64$ and $G=12$ antenna groups, the overall system of equations is overdetermined: from \eqref{eqident} with $L_i=1$, $66=\frac{1}{2} G(G-1) > M-1 = 63$. This means that the proposed fast calibration, which  exploits all data jointly for the parameter estimation, has an advantage over the Avalanche method which solves exactly determined subsets of equations and hence suffers from error propagation. Also, the performance improves when the group sizes are allocated more equitably as in grouping scheme FC-II. Intuitively, the overall estimation performance of the fast calibration would be limited by the (condition number of the) largest group size and hence it is reasonable to use a grouping scheme that tries to minimize the size of the largest antenna group. These observations hold irrespective of the constraints used. Avalanche with the FCC constraint exhibits a huge MSE and hence most portions of this curve fall outside the range of Fig.~\ref{fig:MSR_fast_vs_avalanche}. Note also that the MSE in some cases falls below the CRB, see for instance the MSE NPC FC-I curve at low SNRs. This is because in this SNR region the MSE saturates due to bias and the CRB is no longer applicable as explained in Section~\ref{sec_calibration_bias}.
 
It is also illustrative to consider the case of $M=67$ antennas, which is the maximum number of antennas that can be calibrated with $G=12$ channel uses. As shown in section \ref{sec:fast_calib}, the best strategy is to divide the antennas into $G=12$ groups and letting each group transmit exactly once ($L_i=1$). This then results in a linear system of 66 equations \eqref{eqn:lineq_calib_reduced} plus one constraint in 67 unknowns. 
Indeed, \eqref{eqident} yields $66=\frac{1}{2} G(G-1) = M-1 = 66$.
Thus, the system of equations  is exactly determined by using an appropriate constraint to resolve the scale factor ambiguity. Hence, the error attained by any LS solution would be zero and the different constraints used for estimation would only lead to different scale factors in the calibration parameter estimates. So, all the solutions would be equivalent. Also, FC-I grouping leads to a block triangular structure with square diagonal blocks for the matrix $\bmcY$ defined in \eqref{eqn:cursivey} after removing the first column. Hence, the back substitution based solution performed by Avalanche is indeed the overall LS solution with the first coefficient known constraint. 
Thus, in Fig.~\ref{fig:MSR_fast_vs_avalanche_Nt67} where the performance of these schemes is compared for $M=67$, we see that the curves for Avalanche and fast calibration with the FC-I grouping overlap completely. In general, this behavior would occur whenever the number of antennas corresponds to the maximum that can be calibrated with the number of channel uses (see Sec.~\ref{sec:identifiability}), and the antenna grouping is similar to that for FC-I. At the range of SNRs considered, the MSE is saturated and is hence far below the CRB for this grouping. In fact, only a part of the CRB for the FC-I grouping can be seen as the rest of the curve falls outside the range of the figure. Indeed, though not shown in Fig.~\ref{fig:MSR_fast_vs_avalanche_Nt67}, the MSE curve with the FC-I grouping only starts to overlap with the corresponding CRB curve for SNR beyond 100dB!  However, it is important to note that the performance improves dramatically with a more equitable grouping of the antennas as can be seen from the curves for the FC-II grouping in the same figure.

\begin{figure}[!t]
\centering
\includegraphics[width=\columnwidth]{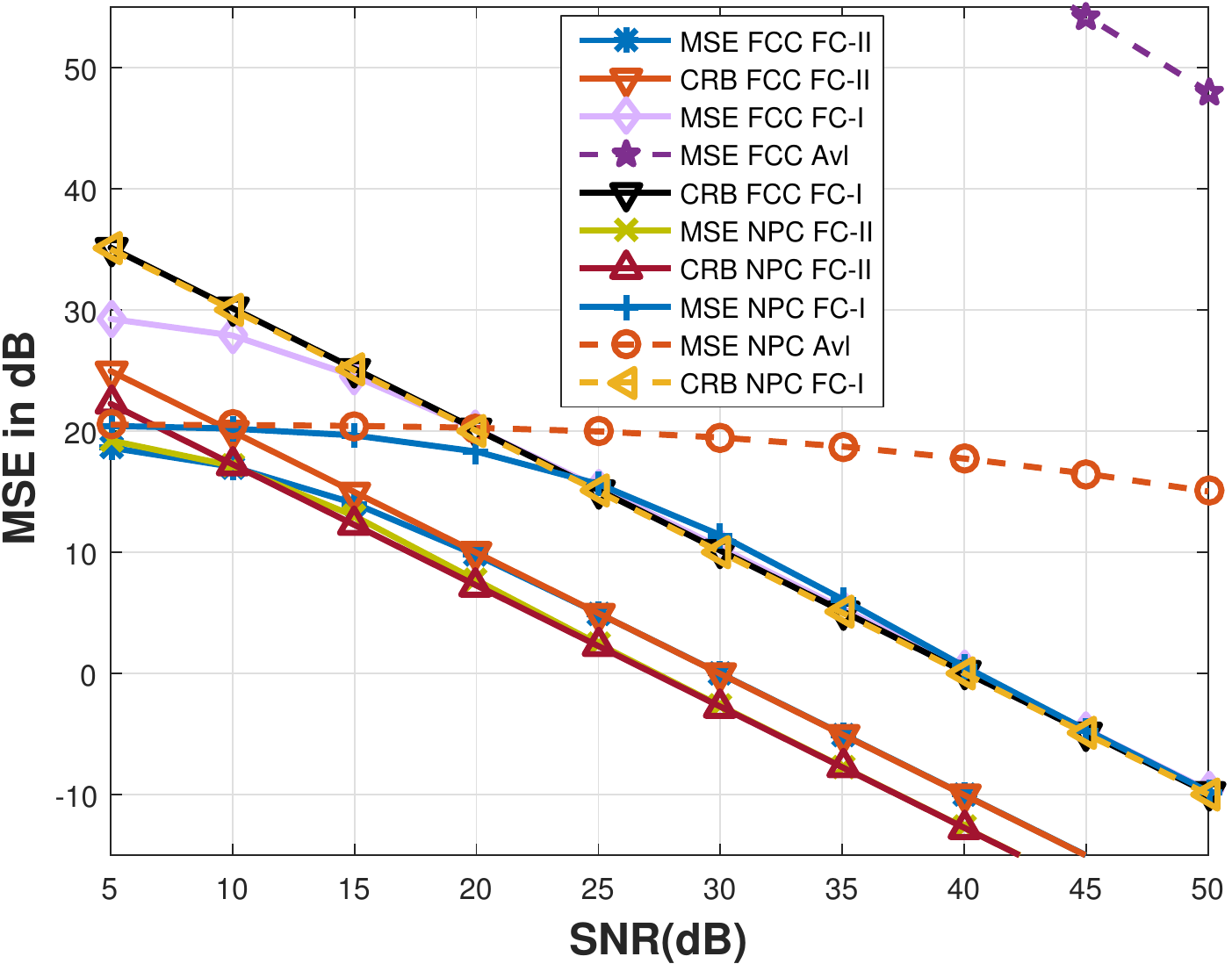}
\caption{Comparison of fast calibration with Avalanche scheme ($M=64$ and the number of channel use is 12). The curves are averaged across 500 channel realizations. The performance with both the FCC and NPC constraints is shown.}
\label{fig:MSR_fast_vs_avalanche}
\end{figure}

\begin{figure}[!t]
\centering
\includegraphics[width=\columnwidth]{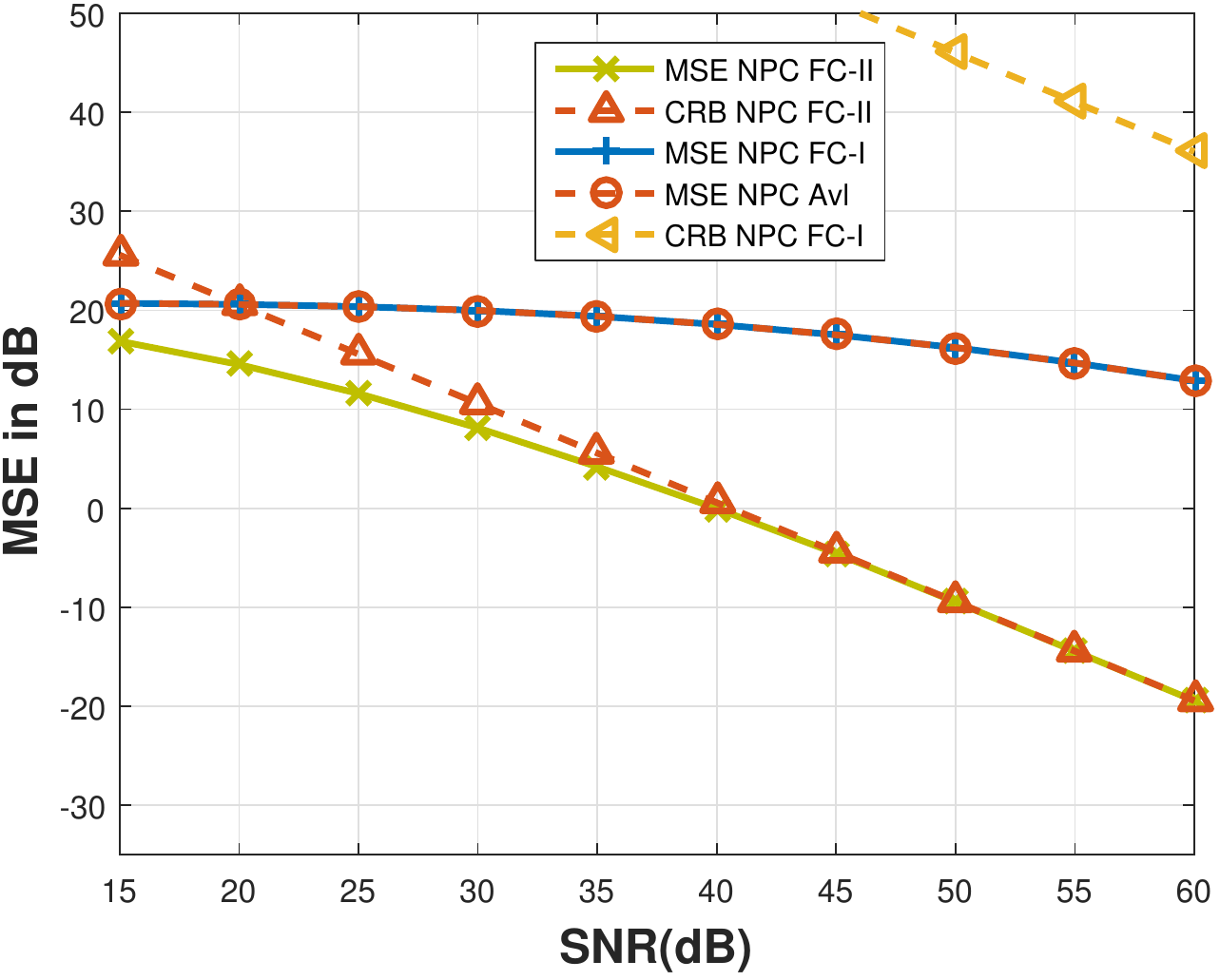}
\caption{Comparison of fast calibration with Avalanche scheme for $M=67$ and number of channel uses 12. The curves are averaged across 500 channel realizations. The NPC constraint is used for the MSE computation.}
\label{fig:MSR_fast_vs_avalanche_Nt67}
\end{figure}

In Fig.~\ref{fig:MSE_ArgRogAML_vs_crlb}, we consider slower transmit schemes that transmit from one antenna at a time ($G=M$) and compare their MSE performance with the CRB. The MSE with FCC for Argos, Rogalin \cite{rogalin2014scalable} and the  AML method in Algorithm \ref{alg_aml} is plotted. As expected, the Rogalin method improves over Argos by using all the bi-directional received data. AML outperforms the Rogalin performance at low SNR. These curves are compared with the CRB derived in \ref{subsec:cramer_rao_bound} for the FCC case and it can be seen that the AML curve overlaps with the CRB at higher SNRs. Also plotted is the CRB as given in \cite{vieira2017reciprocity}  assuming the internal propagation channel is fully known (the mean is known and the variance is negligible) and the  underestimation of the MSE can be observed as expected. To bring out the difference between the two CRB derivations, the amplitude variation parameter $\delta$ is chosen to be 0.5 to increase the range of values of Tx and Rx calibration parameters.

\begin{figure}[!t]
\centering
\includegraphics[width=\columnwidth]{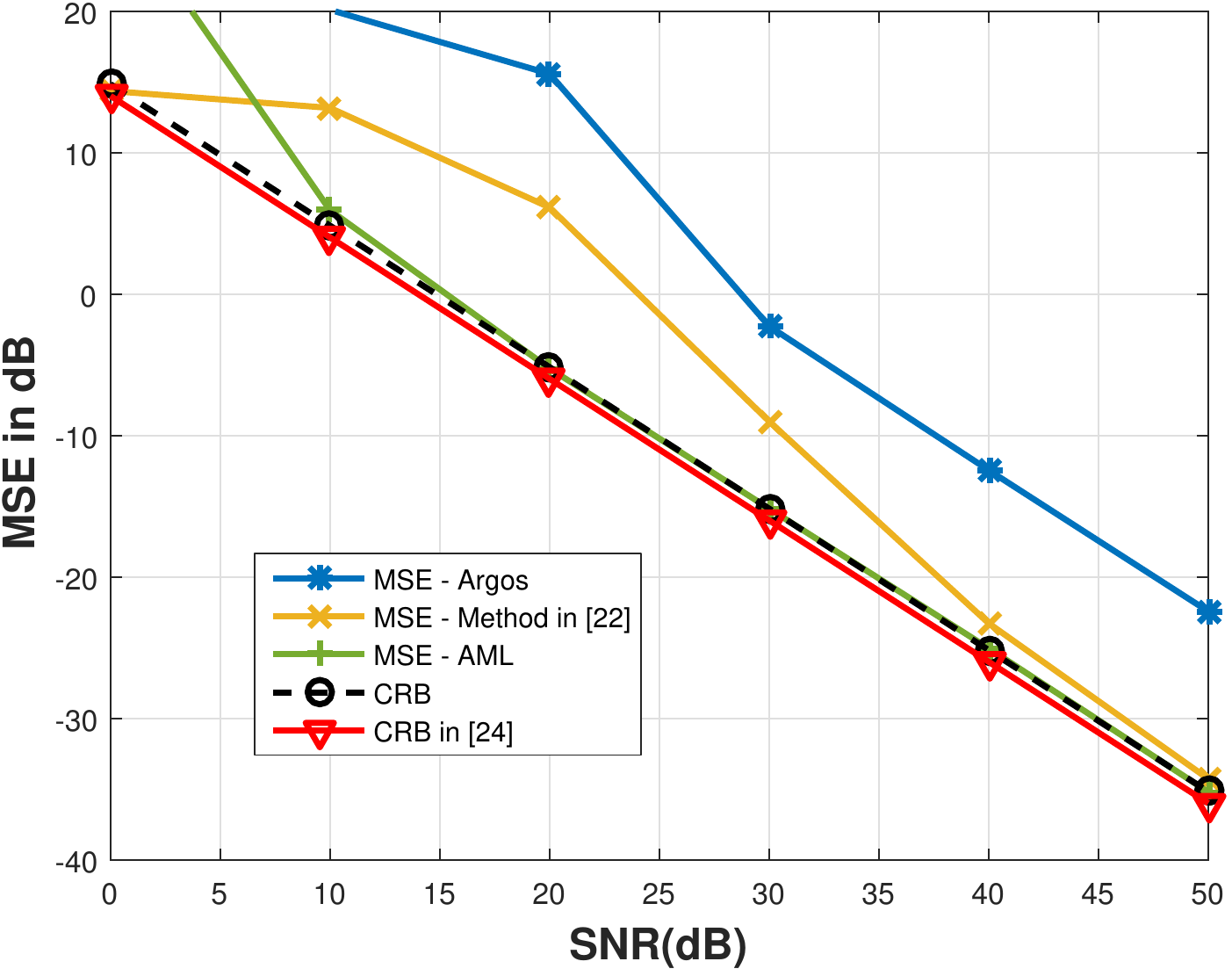}
\caption{Comparison of single antenna transmit schemes with the CRB ($G=M=16$). The curves are generated over one realization of an i.i.d.\  Rayleigh channel and known first coefficient constraint is used.}
\label{fig:MSE_ArgRogAML_vs_crlb}
\end{figure}

So far, we have focused on an i.i.d.\  intra-array channel model and we have seen in Fig.~\ref{fig:MSR_fast_vs_avalanche} and Fig.~\ref{fig:MSR_fast_vs_avalanche_Nt67} that the size of the transmission groups is an important parameter that impacts the MSE of the calibration parameter estimates.
We now consider a more realistic scenario where the intra-array channel is based on the geometry of the BS antenna array and make some observations on the choice of the antennas to form a group. We consider an array of $M=64$ antennas arranged as in Fig.~\ref{fig:BS_ant_pattern}. The path loss $(4 \pi \frac{d_{i \to j}}{\lambda})^2$ between any two antennas $i$ and $j$ is a function of their distance $d_{i \to j}$, and $\lambda$ is the wavelength of the received signal. In the simulations, the distance between adjacent antennas, $d$, is chosen as $\frac{\lambda}{2}$. The phase of the channel between any two antennas is modeled to be a uniform random variable in $[-\pi,\pi]$. Such a model was also observed experimentally in \cite{vieira2017reciprocity}. The SNR is defined as the signal to noise ratio observed at the receive antenna nearest to the transmitter.

\begin{figure}[!t]
\centering
\includegraphics[width=\columnwidth]{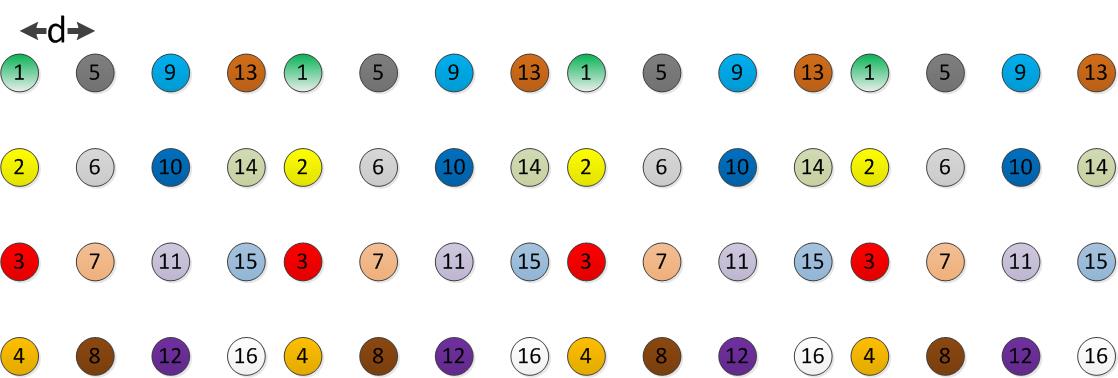}
\caption{64 antennas arranged as a 4 $\times$ 16 grid.}
\label{fig:BS_ant_pattern}
\end{figure}

Continuing with the same internal channel model, consider a scenario in which antennas transmit in $G=16$ groups of 4 each. Note that this is not the fastest grouping possible, but the example is used for the sake of illustration. We consider two different choices to form the antenna groups: 1) interleaved grouping corresponding to selecting antennas with the same numbers into one group as in  Fig.~\ref{fig:BS_ant_pattern}, 2) non-interleaved grouping corresponding to selecting antennas in each column as a group.
Fig.~\ref{fig:N_g16N_t64force_pil0N_chan1N_iter50selectchan_mode0cmp_mode1compute_MSE2LS_method2FFCalib0rnd100_crb_intlv} shows that interleaving of the antennas results in performance gains of about 10dB. Intuitively, the interleaving of the antennas ensures that the channel from a group to the rest of the antennas is as well conditioned as possible.
This example clearly shows that in addition to the size of the antenna groups, the choice of the antennas that go into each group also has a significant impact on the estimation quality of the calibration parameters.
\begin{figure}[!t]
\centering
\includegraphics[width=\columnwidth]
{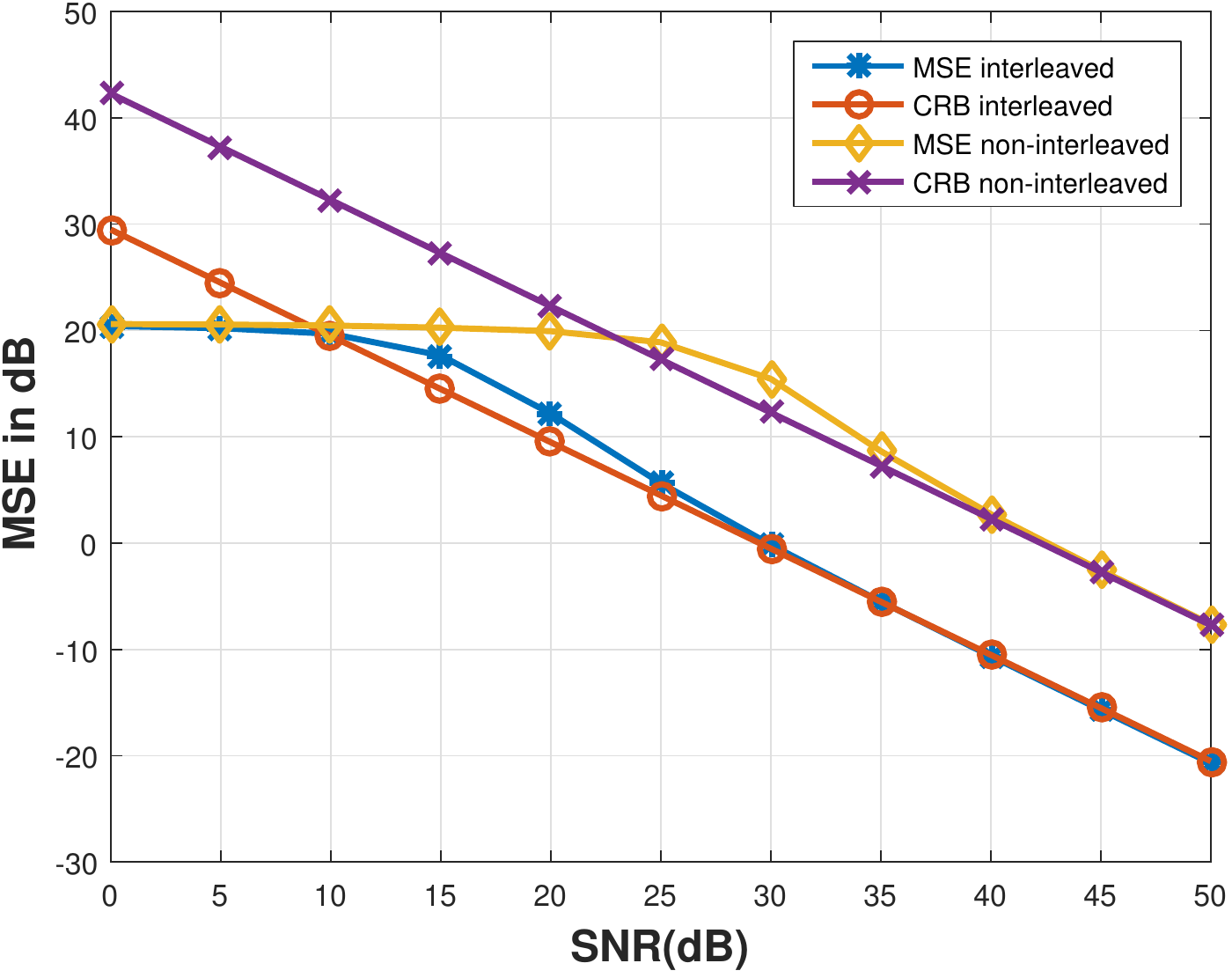}
\caption{Interleaved and non-interleaved MSE and CRB with NPC for an antenna transmit group size of 4 ($M=64$ and the number of channel uses is $G=16$).}
\label{fig:N_g16N_t64force_pil0N_chan1N_iter50selectchan_mode0cmp_mode1compute_MSE2LS_method2FFCalib0rnd100_crb_intlv}
\end{figure}

\section{Conclusions}
\label{sec:conclusions}

In this work we presented an OTA calibration framework which unifies existing calibration schemes. 
This framework opens up new calibration possibilities. As an example, we proposed a family of fast calibration schemes based on antenna grouping. The number of channel uses needed for the whole calibration process is of the order of the square root of the antenna array size rather than scaling linearly. In fact it can be as fast as the existing Avalanche calibration method, but avoiding the severe error propagation problem, thus greatly outperforming the Avalanche method, as has been shown by simulation results. 

We also presented a simple CRB formulation for the estimation of the relative calibration parameters. As the group calibration formulation encompasses the existing calibration methods, the CRB computation can be used to evaluate existing state of the art calibration methods as well. We then proposed a ML estimator and reveal the relationship between ML and LS estimation.

Moreover, we differentiated the notions of coherent and non-coherent accumulations for calibration observations. We illustrated that it is possible to perform calibration measurements using time slots that can be sparsely distributed over a relatively long time. This makes the calibration process consume a vanishing fraction of channel use resources and allows to minimize the impact on the ongoing data service. 

As illustrated by simulations, for the fast calibration, interleaved grouping has a better performance than non-interleaved grouping. However, the best antenna group definitions for given antenna group sizes is still an open question. 
Additionally, the optimal pilot design for calibration is unknown, which is an interesting topic for future work.

\appendices
\section{Optimal grouping}
\label{app:optimal_grouping}
\begin{lemma}
Fix $K\geq 1$. Let us define an optimal grouping as the solution $G^*, L_1^*,\dots,L_{G^*}^*$ of
\begin{equation}
\label{eq:discrete_optimization}
\max_{\sum_{i=1}^G L_i=K} \sum_{i<j} L_iL_j,
\end{equation}
then the optimal grouping corresponds to the case $L_1^*=\dots =L_{G^*}^*=1$ with $G^*=K$. The number of calibrated antennas is then equal to $\frac{1}{2}K(K-1)+1$.
\end{lemma}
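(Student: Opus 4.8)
The plan is to linearize the bilinear objective by means of the square-of-sum identity. Writing the constraint as $\sum_{i=1}^G L_i = K$ and squaring, one has $K^2 = \sum_{i=1}^G L_i^2 + 2\sum_{i<j} L_i L_j$, so that $\sum_{i<j} L_i L_j = \tfrac12\bigl(K^2 - \sum_{i=1}^G L_i^2\bigr)$. Since $K$ is fixed, maximizing the left-hand side over all admissible groupings is therefore \emph{exactly equivalent} to minimizing $\sum_{i=1}^G L_i^2$ subject to $\sum_{i=1}^G L_i = K$, where both the number of groups $G$ and the per-group pilot lengths $L_i$ are free, subject only to each $L_i$ being a positive integer (every transmitting group uses at least one channel use, so $L_i \geq 1$).

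Next I would exploit integrality. For every integer $L_i \geq 1$ one has $L_i^2 \geq L_i$, with equality if and only if $L_i = 1$. Summing over $i$ and using the constraint gives $\sum_{i=1}^G L_i^2 \geq \sum_{i=1}^G L_i = K$, again with equality if and only if every $L_i = 1$. This is the crux of the argument: it shows that the minimum value of $\sum_i L_i^2$ is $K$, attained precisely at the all-ones grouping. When all $L_i = 1$, the constraint $\sum_i L_i = K$ forces the number of groups to be $G^* = K$, which pins down the optimizer uniquely up to the irrelevant relabeling of groups.

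Finally I would substitute back. Plugging the minimal value $\sum_i L_i^2 = K$ into the identity yields the optimal objective $\sum_{i<j} L_i^* L_j^* = \tfrac12(K^2 - K) = \tfrac12 K(K-1)$, and the number of calibrated antennas follows from the identifiability count established around \eqref{eqident}, namely $\sum_{i<j} L_i L_j + 1 = \tfrac12 K(K-1) + 1$. The only point requiring care — and the sole reason the all-ones solution is optimal rather than a balanced split — is integrality: were the $L_i$ allowed to be real, minimizing $\sum_i L_i^2$ under a fixed sum would instead favor spreading the mass equally, so the discreteness constraint $L_i \in \{1,2,\dots\}$ is essential and must be invoked explicitly through the bound $L_i^2 \geq L_i$. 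I do not anticipate any genuine obstacle beyond making this discreteness step precise.
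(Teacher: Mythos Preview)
Your argument is correct and complete. It differs genuinely from the paper's proof, however. The paper argues by contradiction via a local exchange: assuming an optimal grouping has some $L_G>1$, it splits that group into two groups of sizes $L_G-1$ and $1$, keeps the constraint $\sum_i L_i=K$, and checks directly that the objective $\sum_{i<j}L_iL_j$ increases by exactly $L_G-1>0$, contradicting optimality. Your route is global and algebraic: the square-of-sum identity reduces the problem to minimizing $\sum_i L_i^2$, and the integer bound $L_i^2\ge L_i$ (with equality iff $L_i=1$) immediately pins the minimum at $K$ and hence the maximum of the objective at $\tfrac12 K(K-1)$. Your approach is shorter and yields the optimal value in one stroke, while the paper's splitting argument is more constructive in that it exhibits an explicit improvement step from any suboptimal configuration. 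Your remark that integrality is essential is exactly right and is what distinguishes this problem from its real-valued relaxation; the paper's proof uses integrality only implicitly through the strict inequality $L_G-1>0$ when $L_G\ge 2$.
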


\begin{proof}
Since the variables $L_1,\dots,L_G,G$ are discrete and $\sum_{i<j} L_jL_i$ is upper bounded by $K^2$, (\ref{eq:discrete_optimization}) admits at least one solution. Let $\Lbold = (L_1,\dots,L_G)$ be such a solution. We reason by contradiction: suppose that there exists $j$ such that $L_j>1$. Without loss of generality, we can suppose that $L_G>1$. Then, we can break up group $G$ and add one group which contains a single antenna, i.e., let us consider $\Lbold' = (L_1,\dots,L_G-1,1)$. In that case, it holds $\sum_{i=1}^G L_i = \sum_{i=1}^{G+1} L'_i=K$ and 
\begin{eqnarray*}
&&\sum_{j=2}^{G+1}\sum_{i=1}^{j-1} L'_jL'_i\\
&=&  \sum_{j=2}^{G-1}\sum_{i=1}^{j-1} L'_jL'_i + (L'_G+L'_{G+1})\sum_{i=1}^{G-1} L'_jL'_i + L'_GL'_{G+1}\\
 &=& \sum_{j=2}^{G}\sum_{i=1}^{j-1} L_jL_i+ L'_G> \sum_{j=2}^{G}\sum_{i=1}^{j-1} L_jL_i
\end{eqnarray*}
which contradicts the fact that $\Lbold$ is solution to (\ref{eq:discrete_optimization}). We conclude therefore that $L_j=1$ for any $j$ and $G^*=K$.
\end{proof}

\section{Construction of $\bmcF^{\perp}$}
\label{app:Fperp}
We show in the following that the column space of $\bmcF^{\perp}$ defined by \eqref{eqn:Pperp} spans the orthogonal complement of the column space of $\bmcF$ assuming that $\mathbf{P}_i$ is full rank for all $i$ and that either $L_i\geq M_i$ or $M_i\geq L_i$ for all i.
\begin{proof}
First, using $(\A\otimes\B)(\C\otimes\D) = (\A\C\otimes\B\D)$, it holds
\beq
\underbrace{\left[\I_{L_i} \otimes \mathbf{P}_j^T\F_j\;\; -   \mathbf{P}_i^T\F_i \otimes \I_{L_j}\right]}_{L_iL_j \times (L_iM_j + L_jM_i)}
\underbrace{\left[\begin{array}{l}
\mathbf{P}_i^T\F_i \otimes \I_{M_j} \\
\I_{M_i} \otimes \mathbf{P}_j^T\F_j
\end{array}\right]}_{(L_iM_j + L_jM_i) \times M_iM_j } = \0\; .
\label{eqMLLS2}
\eeq
Then, the row space of the left matrix of \eqref{eqMLLS2} is orthogonal to the column space of the right matrix. As $\bmcF$ in \eqref{eqn:crb_defns} and $\bmcF^{\perp\, H}$  are block diagonal with blocks of the form of \eqref{eqMLLS2}, it suffices then to prove that the following matrix $\M$ has full column rank, i.e.,  $L_iM_j + L_jM_i$,
which is then also its row rank
\begin{equation}
\M:=\left(\begin{array}{cc}
\I_{L_i}\otimes \mathbf{P}_j^T\F_j & -\mathbf{P}_i^T\F_i \otimes \I_{L_j}\\ 
(\F_i\mathbf{P}_i)^*\otimes \I_{M_j} & \I_{M_i}\otimes(\F_j\mathbf{P}_j)^*
\end{array}\right).
\end{equation}
Denote $\A_i:=\mathbf{P}_i^T\F_i\in\mathbb{C}^{L_i\times M_i}$ and $\A_j:=\mathbf{P}_j^T\F_j\in\mathbb{C}^{L_j\times M_j}$. Then, by assumption, it holds that either $\text{rank}(\A_i)=M_i$ and $\text{rank}(\A_j)=M_j$ or $\text{rank}(\A_i)=L_i$ and $\text{rank}(\A_j)=L_j$.
Let $\x=[\x_1^T\,\x_2^T]^T$ be such that $\M\x=0$ and show that $\x=0$.
Since $\M\x=0$, it holds
\begin{equation*}
\left\{\begin{array}{l}
(\I_{L_i}\otimes \A_j)\x_1 - (\A_i\otimes\I_{L_j})\x_2 = 0\\
(\A_i^H\otimes \I_{M_j})\x_1 + (\I_{M_i}\otimes\A_j)\x_2 = 0.
\end{array}\right.
\end{equation*}
Let $\X_1$ and $\X_2$ be matrices such that $\text{vec}(\X_1) = \x_1$ and $\text{vec}(\X_2) = \x_2$. Then
\begin{equation*}
\left\{\begin{array}{l}
\A_j\X_1 - \X_2\A_i^T = 0\\
\X_1\A_i^*+\A_j^H\X_2 = 0.
\end{array}\right.
\end{equation*}
Multiplying the first equation by $\A_j^H$ and the second by $\A_i^T$, and summing them up, we get $\A_j^H\A_j\X_1+\X_1(\A_i\A_i^H)^*=0$, which is a Sylvester's equation admitting a unique solution if $\A_j^H\A_j$ and $-(\A_i\A_i^H)^*$ have no common eigenvalues. On the other hand, the eigenvalues of $\A_j^H\A_j$ and $\A_i\A_i^H$ are real positive, so common eigenvalues of $\A_j^H\A_j$ and $-(\A_i\A_i^H)^*$ can only be $0$. However, this does not occur since by the assumptions either $\A_j^H\A_j$ or $\A_i\A_i^H$ is full rank. We can then conclude that $\X_1 = 0$, i.e., $\x_1=0$.
Similarly, $\x_2=0$, which ends the proof.
\end{proof}


\bibliographystyle{IEEEtran} 
\bibliography{refs}
\end{document}